\numberwithin{equation}{section}
\newtheorem{thm}{Theorem}
 \numberwithin{thm}{section}
\newtheorem{defi}[thm]{Definition}
\newtheorem{ej}[thm]{Example}
\newtheorem{lemma}[thm]{Lemma}
\newtheorem{remark}[thm]{Remark}
\newcommand{\R}{\mathbb{R}}
\newcommand{\N}{\mathbb{N}}
\newcommand{\Z}{\mathbb{Z}}
\DeclareMathOperator{\Id}{Id}
\DeclareMathOperator{\minor}{minor}
\newcommand{\arrowschem}[2]{\raisebox{-2ex}%
	{$\stackrel{\stackrel{\displaystyle#1}{\longrightarrow}}%
	{\stackrel{\longleftarrow}{#2}}$}}
\tikzset{acyl chain/.style={decorate,line width=1 pt,color=lightgray,decoration={random steps,segment length=1.5 pt,amplitude=0.5pt}}}
\tikzset{lipid head/.style={color=gray,fill=lightgray}}
\tikzset{
  terminal/.style={
    rounded rectangle,
    minimum size=3mm,
    very thick,draw=blue!80,
    top color=white,bottom color=black!20},
  skip loop/.style={to path={-- ++(0,#1) -| (\tikztotarget)}}
}
  \tikzset{terminal/.append style={text height=1ex,text depth=.25ex}}
  \tikzset{nonterminal/.append style={text height=1.5ex,text depth=.25ex}}
\begin{document}

\title{Regions of multistationarity in cascades of Goldbeter-Koshland loops}

\author{Magal\'{\i} Giaroli}
\address{Dto.\ de Matem\'atica, FCEN, Universidad de Buenos Aires, and IMAS (UBA-CONICET), Ciudad Universitaria, Pab.\ I, 
C1428EGA Buenos Aires, Argentina}
\email{mgiaroli@dm.uba.ar}

\author{Fr\'ed\'eric Bihan}
\address{Laboratoire de Math\'ematiques\\
         Universit\'e Savoie Mont Blanc\\
         73376 Le Bourget-du-Lac Cedex\\
         France}
\email{Frederic.Bihan@univ-savoie.fr}
\urladdr{http://www.lama.univ-savoie.fr/~bihan/}

\author{Alicia Dickenstein}
\address{Dto.\ de Matem\'atica, FCEN, Universidad de Buenos Aires, and IMAS (UBA-CONICET), Ciudad Universitaria, Pab.\ I, 
C1428EGA Buenos Aires, Argentina}
\email{alidick@dm.uba.ar}
\urladdr{http://mate.dm.uba.ar/~alidick}

\thanks{AD and MG are partially supported by UBACYT 20020100100242, 
CONICET PIP 11220150100473, and ANPCyT PICT 2013-1110, Argentina.}  

\date{}

\begin{abstract}
We consider cascades of enzymatic Goldbeter-Koshland loops~\cite{GK81} 
with any number $n$ of layers, for which there exist two layers involving the same phosphatase. 
Even if the number of variables and the number of conservation
laws grow linearly with $n$, we find explicit regions in {\it reaction rate constant} 
and {\it total conservation constant} space for which the associated mass-action kinetics 
dynamical system is multistationary. Our computations are based on the theoretical results of 
our companion paper~\cite{AGB1}, which are inspired by results in
real algebraic geometry by Bihan, Santos and Spaenlehauer~\cite{bihan}.
\end{abstract}

\keywords{Enzymatic cascades, Goldbeter-Koshland loops, sparse polynomial systems, multistationarity}
\maketitle

\section{Introduction}

Signal transduction is the process through which cells communicate with the external environment, 
interpret stimuli and respond to them. This mechanism is controlled by signaling cascades. 
Classical signaling pathways typically contain a cascade of phosphorylation cycles 
where the activated protein in one layer acts as the modifier enzyme in the next layer. 
An example of signaling cascades is the Ras cascade (see Figure~\ref{fig:RAS},
as it is usually depicted in the biochemistry literature), which is an
 important signaling pathway in mitogen-activated protein kinases (MAPKs). This cascade 
 reaction activates transcription factors and regulates gene expression. The Ras signaling 
 pathway has a significant role in the occurrence and development of diseases such as  cancer~\cite{HCC} or developmental defects~\cite{stan}.
One key property  is the occurrence of multistability, which triggers
different crucial cellular events. A basic condition for these different cellular responses
 is the emergence of multistationarity.

A reaction network $G$ on a given  set  of
$s$ chemical  species is a finite directed graph  
whose edges $\mathcal{R}$ represent the reactions and are labeled by parameters $\kappa\in \R^{|\mathcal{R}|}_{>0}$, 
 known as \textit{reaction rate constants}, and whose  vertices are labeled by complexes, usually represented
 as nonnegative integer linear combinations of species.
After numbering the species, a complex can be identified with a vector in $\Z^s_{\geq 0}$.
Under mass-action kinetics, $G$ defines the following autonomous system of ordinary differential equations
in the concentrations $x_1, x_2, \dots, x_s$ of the species as \textit{functions of time $t$}:
\begin{equation}\label{fam}\dot{x} = f(x) = \left(\frac{dx_1}{dt},\frac{dx_2}{dt},\dots,\frac{dx_s}{dt}\right) =
\sum_{y\rightarrow{y'}\in\mathcal{R}}\kappa_{yy'} \, x^{y} \, (y'-y),
\end{equation}
where $x=(x_1,x_2,\dots, x_s)$, $f=(f_1, \dots, f_s)$, $x^y = x_1^{y_1}x_2^{y_2}\dots x_s^{y_s}$ and $y\rightarrow{y'}$ indicates that 
 the complex $y$ reacts to the complex $y'$ and $(y,y')\in\mathcal{R}$.
The steady states of the system correspond to constant trajectories, that is, to the common zero set 
of the polynomials $f_1, \dots, f_s \in \R[x_1, \dots, x_s]$.
As the vector $\dot{x}(t)$ lies for all time $t$ 
in the linear subspace $S$ spanned by 
the reaction vectors $\{y'-y :  y\rightarrow{y'}\in \mathcal{R}\}$ (which is known as the \textit{stoichiometric subspace}),
it follows that any trajectory $x(t)$ 
lies in a translate of $S$. Moreover, if $x(0)=x^0\in\R^s_{> 0}$, then $x(t)$ lies for any $t$ (in
the domain of definition) in the \textit{stoichiometric 
	compatibility class} $(x^0+S)\cap\R^s_{\geqslant 0}$.  We will work with conservative systems and so
	all trajectories will be defined for any $t \ge 0$. The linear equations of $x^0+S$ give
 \textit{conservation laws}. If $x^0 \in \R^s_{> 0}$, we can also write  the linear variety $x^0+S$ in the form:
 $\{x \in \R^s \, : \, \ell_1(x)=T_1, \dots, \ell_{\sigma}(x)= T_{\sigma}\}$,
 where $\ell_1, \dots, \ell_{\sigma}$ are linear forms defining a basis of the subspace orthogonal to $S$ and
 $T=(T_1, \dots, T_\sigma) \in \R^\sigma_{\ge 0}$. These constant values are called {\it total conservation constants}.
 
The network $G$ is said to \textit{have the capacity for multistationarity} 
if there exists a choice of reaction rate constants $\kappa$ and total conservation constants $T$ 
such that there are two or more steady states of  system~\eqref{fam} in the 
stoichiometric compatibility class determined by $T$. 
Several articles studied the capacity for multistationarity from the structure of the directed
 graph of reactions~\cite{banajipantea,feliu, fw13,fc, anne,mueller, aliciaMer}, a line initiated in~\cite{cfI, cfII}. 
When the capacity for multistationarity of $G$ is determined, the following difficult 
step is to find values of  multistationary parameters. This is a question of
quantifier elimination in real algebraic geometry, which is effective,  but for 
interesting networks the complexity of the computations with general standard
tools is too high. Several articles  in the literature addressed this question, 
with different approaches based on ad-hoc computations,  injectivity results which use signs of minors in different forms, 
degree theory \cite{feliumincheva,cfr,mincheva,kfc,kothamanchu, sontag} and
the study of sparse real polynomials~\cite{gatermann}.

In this work,  we use tools from real algebraic geometry based on the papers~\cite{AGB1,bihan},
to analyze multistationarity in cascades of enzymatic Goldbeter-Koshland loops. 
A second important ingredient of our approach is the observation  
that enzymatic cascades have the structure of
\emph{MESSI systems}, introduced and studied in~\cite{aliciaMer}, from which an explicit
parametrization of the steady states can be obtained, even in presence of multistationarity.
We show how to deform a given set of parameters of the model to produce multistationarity, 
including both the reaction rate constants and the total concentration constants. Moreover,
we identify open sets where multistationarity occurs in the space of all these parameters.

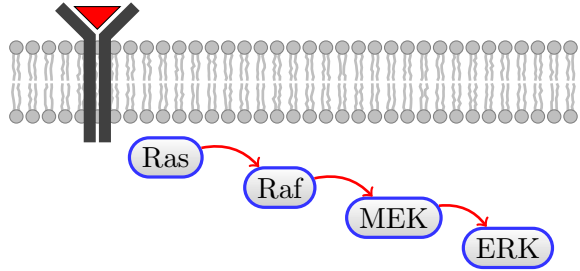
\begin{figure}[t]
\begin{tikzpicture}


\newcommand{\membranepath} { (0,0) .. controls (1,0) and (2,0) .. (7.5,0) } 
\draw [decorate, decoration={lipidleaflet}] \membranepath ; 
\draw [decorate, decoration = {lipidleaflet, mirror}] \membranepath ; 
\path \membranepath  node [sloped,near end, below=20pt] {};

\draw[line width=4.5pt,color=darkgray] (1,-0.8) -- (1,0.6) -- (0.6,1);
\draw[line width=4.5pt,color=darkgray] (1.2,-0.8) -- (1.2,0.6) -- (1.6,1);
\draw [fill=red] (0.8,1) -- (1.4,1) -- (1.1,0.7)--cycle;

\node [terminal] at (2.00000, -1.0000) (e0) {Ras};    
\node [terminal] at (3.500000, -1.4000) (e1) {Raf}; 
\node [terminal] at (5.00000, -1.800) (e2) {MEK};                     
\node [terminal] at (6.50000, -2.2000) (e3) {ERK}; 

\path[->,line width=1pt,color=red] (e0) edge [bend left] (e1);
\path[->,line width=1pt,color=red] (e1) edge [bend left] (e2);
\path[->,line width=1pt,color=red] (e2) edge [bend left] (e3);

\end{tikzpicture} 
\caption{The Ras pathway.} \label{fig:RAS}
\end{figure}

In Section~\ref{sec:1} we state and explain the theoretical setting presented in our companion paper~\cite{AGB1},
 based on the results in~\cite{bihan}. 
We apply our method to enzymatic cascades of Goldbeter-Koshland loops in Sections~\ref{sec:2} and \ref{sec:3}. 
In Section~\ref{sec:2} we apply our method to an enzymatic cascade with two layers 
and in Section~\ref{sec:3} we work with the general case of $n$ layers and present our main
results (Theorems~\ref{th:provisorio-cascaden} and ~\ref{th:cascadan-noconsecutivos}.)
In this case, the associated polynomial systems have positive dimensions growing linearly 
with $n$. The number of conservation relations 
(and then of total conservation constants) also grows linearly with $n$, and it is
at least four if $n \ge 2$. Such systems were studied in~\cite{sepulchereventura2, feliu2} when all the
enzymes are different, in which case there cannot be more than one positive steady state. 
This fact is proved in~\cite{feliu2} and also is a  particular case of a more general result
 in~\cite{treenetworks}, in which the authors work with a more general structure:
  tree networks of Goldbeter-Koshland loops. In the case of two layers, 
it was shown in~\cite{feliu} that in the case $n=2$ (see Figure~\ref{fig:sadi}),
 if the same phosphatase is acting at both layers, then the network has the capacity for multistationarity. 
It can be deduced from the results in~\cite{banajipantea}, that if there are any number of layers, and
the last two share a phosphatase, multistationarity parameters for the case $n=2$ can be 
extended to produce multistationarity parameters in the general case.

 Our approach allows us
to describe open sets in the rate constant and total concentration parameters which ensure 
multistationarity as long as at any pair of layers in the cascade there is a shared
phosphatase. Note for instance the structure of the statement of Theorem~\ref{th:provisorio-cascaden}:
if the given rate constants verify inequality~\eqref{eq:ineq}, then we give explicit inequalities
on the total conservation constants for which multistationarity occurs after tuning some 
 of the reaction rate constants not involved in~\eqref{eq:ineq}, that we clearly specify.

Our results can be generalized to describe multistationarity regions for other architectures of cascades which define
 MESSI systems.  For this purpose, we state and
prove in an Appendix the extension Theorem~\ref{thm:extending}, 
 that abstracts some of our computations in Section~\ref{sec:3}. 
We also refer to the general results in Section~5 in~\cite{AGB1}.
For example, in the case of the Ras cascade in Figure~\ref{fig:RAS}, previous papers
studied rate constant multistationarity parameters  (see e.g.~\cite{cfr,MerT}).
Our methods yield multistationarity regions for this signaling pathway in terms of rate constants and total
conservation parameters.  We omit these computations, which are similar
 to the ones we detail in Sections~\ref{sec:2} and~\ref{sec:3}.

\begin{figure}
	\centering
	\begin{tabular}{ccc}
		{\footnotesize (A)} \begin{tikzpicture}[scale=0.7,node distance=0.5cm]
		\node[] at (1.1,-1.5) (dummy2) {};
		\node[left=of dummy2] (p0) {$P_0$};
		\node[right=of p0] (p1) {$P_1$}
		edge[->, bend left=45] node[below] {\textcolor{black!70}{\small{$F$}}} (p0)
		edge[<-, bend right=45] node[above] {} (p0);
		\node[] at (-1,0) (dummy) {};
		\node[left=of dummy] (s0) {$S_0$};
		\node[right=of s0] (s1) {\textcolor{blue}{$S_1$}}
		edge[->, bend left=45] node[below] {\textcolor{black!70}{\small{$F$}}} (s0)
		edge[<-, bend right=45] node[above] {\textcolor{black!70}{\small{$E$}}} (s0)
		edge[->,thick,color=blue, bend left=25] node[above] {} ($(p0.north)+(25pt,10pt)$);
		\end{tikzpicture}
		& \hspace{0.6cm} &
		{\footnotesize (B)} \begin{tikzpicture}[scale=0.7,node distance=0.5cm]
		\node[] at (1.1,-1.5) (dummy2) {};
		\node[left=of dummy2] (p0) {$P_0$};
		\node[right=of p0] (p1) {$P_1$}
		edge[->, bend left=45] node[below] {\textcolor{black!70}{\small{$F_2$}}} (p0)
		edge[<-, bend right=45] node[above] {} (p0);
		\node[] at (-1,0) (dummy) {};
		\node[left=of dummy] (s0) {$S_0$};
		\node[right=of s0] (s1) {\textcolor{blue}{$S_1$}}
		edge[->, bend left=45] node[below] {\textcolor{black!70}{\small{$F_1$}}} (s0)
		edge[<-, bend right=45] node[above] {\textcolor{black!70}{\small{$E$}}} (s0)
		edge[->,thick,color=blue, bend left=25] node[above] {} ($(p0.north)+(25pt,10pt)$);
		\end{tikzpicture}
	\end{tabular}
	\caption{Same and different phosphatases in a 2-layer cascade of GK-loops.}\label{fig:sadi}
\end{figure}
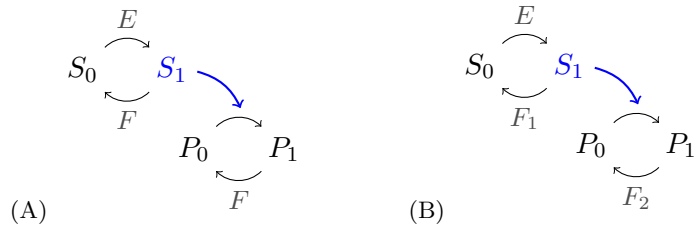

 \section{Positive solutions of sparse polynomial systems}\label{sec:1}
 
 We summarize some results from Section~2 in~\cite{AGB1}, where complete
 details can be found. We also refer the reader to~\cite{triangulations} for the
 combinatorial objects that we introduce in this section.
 
 We consider a  \textit{polynomial system} of $d$ Laurent polynomial equations 
 $f_{1}(x)=\dots=f_{d}(x)=0$ in $d$ variables $x=(x_1, \dots, x_d)$, with
 \begin{equation}\label{systemf}
 f_{i}(x)=\sum_{j=1}^n c_{ij} \, x^{a_j} \in \R[x_1,\dots,x_d], \ i=1,\dots,d,
 \end{equation}
 where the exponents belong to a fixed finite point configuration $\mathcal{A} =
  \{a_1, \dots, a_n\}\subset \Z^d$, with $n+2\geq d$.   We denote by $C = (c_{ij}) \in \R^{d \times n}$ the 
  \textit{coefficient matrix} of the system and we assume w.l.o.g.
   that no column of $C$ is identically zero. The set  $\mathcal{A}$ is 
  called the \textit{support} of the polynomial system.

In order to understand the notation we present a basic example.
 
\begin{ej}\label{example:support} \rm  We consider the following point configuration
	\begin{equation}\label{eq:support}
	\mathcal{A}=\{(0,0),(2,0),(0,1),(2,1),(1,2),(1,3)\}\subset \Z^2,
	\end{equation}
	and the coefficient matrix
	\begin{equation}\label{eq:coefficientmatrix}
		C=\left(\!\!\begin{array} {rrrrrr}
		1 & -2 & 1 & 1 & -1 & 0\\
	-2 & 1 & 0 & -1 & -1 & 1
	\end{array}\right).
	\end{equation}
	The polynomial system of two polynomial equations and two variables $x, y$:
	$$\begin{array}{rcl}
	1 - 2 \, x^2 + y + x^2 y - x y^2 & = & 0,\\
	-2 + x^2 - x^2 y - x y^2 + x y^3 & = & 0,
	\end{array}$$
	has support $\mathcal{A}$ and coefficient matrix $C$, and we write it in
	the form:
	\[ C \begin{pmatrix} 1 & x^2 & y & x^2y & xy^2 & xy^3
	\end{pmatrix}^t=0, \]
	where $t$ denotes the transpose.
\end{ej}

Our idea to ensure multistationarity, based on \cite{bihan}, is to restrict our polynomial 
system \eqref{systemf} to subsystems which have a positive solution and then extend 
these solutions to the total system, under a deformation of the coefficients. 
So, we are first interested in finding conditions  in the coefficient matrix that guarantee a positive solution in the subsystems. 

Suppose that the convex hull of $\mathcal{A}$ is a full-dimensional polytope $Q$. 
Figure~\ref{fig:support} shows the convex hull of $\mathcal{A}$ of Example~\ref{example:support}.
 A \textit{$d$-simplex} with vertices in $\mathcal{A}$ is a subset of $d+1$ points of $\mathcal{A}$
  affinely independent.
 Following Section 3 in~\cite{bihan}, we define:

 \begin{defi} Given any $d\times (d+1)$ matrix $M$ with real entries, we denote
 by  $\minor(M,i)$ the determinant of the square matrix obtained by
  removing the $i$-th column. The matrix $M$ is called positively spanning 
 if all the values $(-1)^i\minor(M,i)$, for $i=1,\dots,d+1$, are nonzero and have the same sign.
 \end{defi}
 
Thus, a matrix is positively spanning if all the coordinates of any non-zero vector 
 in the kernel of the matrix are non-zero and have the same sign. For example, the matrix
 \[M=\begin{pmatrix} 1 & 0 &-1\\
 0 & 1 & -1
 \end{pmatrix}
 \]
 is positively spanning because the values $(-1)^1\minor(M,1)=-1$, $(-1)^2\minor(M,2)=-1$ 
 and $(-1)^3\minor(M,3)=-1$ are all nonzero and have the same sign. Also, any non-zero vector 
 in the kernel of $M$ is of the form $\lambda(1,1,1)$, with $\lambda\in\R-\{0\}$, that is, 
 all the coordinates of one of these vectors share the same sign. 
 
 \begin{figure}[t]
	\centering
	\begin{tikzpicture}
	[scale=1.200000,
	back/.style={loosely dotted, thin},
	edge/.style={color=black, thick},
	facet/.style={fill=red!70!white,fill opacity=0.800000},
	vertex/.style={inner sep=1.5pt,circle,draw=black,fill=black,thick,anchor=base
	}]
	%
	%
	\coordinate (0.00000, 0.00000) at (0.00000, 0.00000);
	\coordinate (0.00000, 1.00000) at (0.00000, 1.00000);
	\coordinate (2.00000, 0.00000) at (2.00000, 0.00000);
	\coordinate (2.00000, 1.00000) at (2.00000, 1.00000);
	\coordinate (1.00000, 2.00000) at (1.00000, 2.00000);
	\coordinate (1.00000, 3.00000) at (1.00000, 3.00000);
	\fill[facet, fill=yellow!40!white,fill opacity=0.800000] (0.0000, 0.00000) --(0.0000, 1.00000) --  (1.00000, 3.00000)-- (2.00000,1.00000) -- (2.00000,0.00000) -- cycle {};

	\draw[edge] (0.00000, 0.00000) -- (0.00000, 1.00000);
	\draw[edge] (0.00000, 0.00000) -- (2.00000, 0.00000);
	\draw[edge] (2.00000, 0.00000) -- (2.00000, 1.00000);
	\draw[edge] (2.00000, 1.00000) -- (1.00000, 3.00000);
	\draw[edge] (1.00000,3.00000) -- (0.00000, 1.00000);
	\node[vertex] at (0.00000, 0.00000){};
	\node[vertex] at (0.00000, 1.00000){};
	\node[vertex] at (2.00000, 0.00000){};
	\node[vertex] at (2.00000, 1.00000){};
	\node[vertex] at (1.00000, 2.00000){};
	\node[vertex] at (1.00000, 3.00000){};
	\end{tikzpicture}
	\caption{Convex hull of the support $\mathcal{A}$ of the Example~\ref{example:support}.}
	\label{fig:support}
\end{figure}
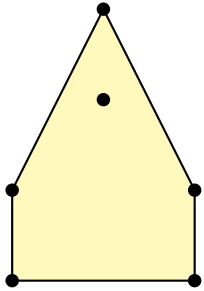
 
 It can be shown (Proposition~3.3 in \cite{bihan}) that if a polynomial system of $d$ polynomial 
 equations in $d$ variables has a $d$-simplex as support, then it has one non-degenerate positive solution
  if and only if its $d\times (d+1)$ matrix of coefficients is positively spanning.
  
 \begin{defi} \label{def:dec}
 Let $C$ be a $d\times n$ matrix with real entries. A $d$-simplex 
 $\Delta=\{a_{i_1},\dots,a_{i_{d+1}}\}$ is said to be 
 positively decorated by $C$
  if the $d\times(d+1)$ submatrix of $C$ with columns $\{i_1,\dots,i_{d+1}\}$ is positively spanning.
 \end{defi}

\begin{ej}\label{example:simplices} \rm (Example \ref{example:support} continued).
Consider again the support with vertices in $\mathcal{A}$ in~\eqref{eq:support}, and the following $2$-simplices:
\begin{align*}
\Delta_1=\{(0,0),(2,0),(0,1)\},\quad \Delta_2=\{(2,0), (0,1), (2,1)\},\quad \Delta_3=\{(0,1),(2,1),(1,2)\},
\end{align*}
depicted in Figure~\ref{fig:simplices}. 
The submatrix of $C$ given by the columns of $\Delta_1$ (the first three columns) equals:
 \[\left(\!\!\begin{array}{rrr}
  1 & -2 &1\\
-2 & 1 & 0
\end{array}\right).
\]
This matrix is positively spanning, so the simplex $\Delta_1$ is positively decorated by $C$.
 It is easy to check that the simplex $\Delta_2$ is also positively decorated by $C$.
But the submatrix  given by the columns of $\Delta_3$ (the columns 3, 4 and 5):
\[\left(\begin{array}{rrr}
  1 &1 & -1\\
0 & -1 & -1
\end{array}\right)
\]
is not positively spanning, and then the simplex $\Delta_3$ is not positively decorated by $C$. 
	
\end{ej}

Given a $d$-simplex $\Delta$ with vertices in $\mathcal{A}$, we consider  height vectors $h\in\R^n$,
 where each coordinate $h_j$ of $h$ gives the value of a lifting function on
 the point $a_j$ of $\mathcal{A}$. 
 Denote by $\varphi_{\Delta,h}$ the unique affine function that agrees with $h$ on the points of $\Delta$,
 that is, $\varphi_{\Delta,h}(a_j)=h_j$ for all $a_j\in\Delta$. 
 We associate with $\Delta$ the following cone:
\[\mathcal{C}_{\Delta}=\{h=(h_1,\dots,h_n)\in\R^n \, : \,  \varphi_{\Delta,h}(a_j) \, < \, h_j \text{ for all } a_j\notin \Delta\}. \]

Assume that two simplices $\Delta_1$ and $\Delta_2$ share a common facet, that is, the points
in $\mathcal A$ that lie in a face of dimension $d-1$ of its convex hull,
and they only intersect there (see Figure~\ref{fig:2simplicescommonfacet} for an example). 
In this case, it can be shown that the cone $\mathcal{C}_{\Delta_1,\Delta_2}$ defined as the intersection
\begin{equation}\label{eq:Delta12}
\mathcal{C}_{\Delta_1,\Delta_2}=\mathcal{C}_{\Delta_1}\cap \mathcal{C}_{\Delta_2},\end{equation}
 is nonempty (see Proposition~2.5 in~\cite{AGB1}).

  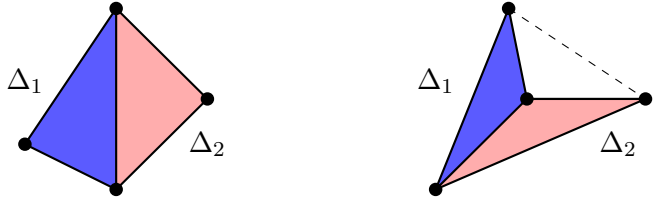
\begin{figure}[h]
	\centering
	\begin{tikzpicture}
	[scale=1.200000,
	back/.style={loosely dotted, thin},
	edge/.style={color=black, thick},
	facet/.style={fill=red!70!white,fill opacity=0.800000},
	vertex/.style={inner sep=1.5pt,circle,draw=black,fill=black,thick,anchor=base
	}]
	%
	%
	\coordinate (0.00000, 0.50000) at (0.00000, 0.50000);
	\coordinate (2.00000, 1.00000) at (2.00000, 1.00000);
	\coordinate (1.00000, 0.00000) at (1.00000, 0.00000);
	\coordinate (1.00000, 2.00000) at (1.00000, 2.00000);
	\fill[facet, fill=blue!80!white,fill opacity=0.800000] (1.00000, 2.00000) --  (1.00000, 0.00000)-- (0.00000,0.50000) -- cycle {};
	\fill[facet, fill=red!40!white,fill opacity=0.800000] (1.00000, 2.00000) --  (1.00000, 0.00000)-- (2.00000,1.00000) -- cycle {};
	
	\draw[edge] (0.00000, 0.50000) -- (1.00000, 2.00000);
	\draw[edge] (0.00000, 0.50000) -- (1.00000, 0.00000);
	\draw[edge] (1.00000, 0.00000) -- (1.00000, 2.00000);
	\draw[edge] (1.00000, 0.00000) -- (2.00000, 1.00000);
	\draw[edge] (1.00000, 2.00000) -- (2.00000, 1.00000);
	\node[vertex] at (0.00000, 0.50000){};
	\node[vertex] at (1.00000, 0.00000){};
	\node[vertex] at (1.00000, 2.00000){};
	\node[vertex] at (2.00000, 1.00000){};
	\node at (2.00000, 0.50000){$\Delta_2$};
	\node at (0.00000, 1.20000){$\Delta_1$};
	\end{tikzpicture}
	\hspace{2cm}
	\begin{tikzpicture}
	[scale=1.200000,
	back/.style={loosely dotted, thin},
	edge/.style={color=black, thick},
	facet/.style={fill=red!70!white,fill opacity=0.800000},
	vertex/.style={inner sep=1.5pt,circle,draw=black,fill=black,thick,anchor=base
	}]
	%
	%
	\coordinate (0.00000, 0.00000) at (0.00000, 0.00000);
	\coordinate (1.00000, 1.00000) at (1.00000, 1.00000);
	\coordinate (2.30000, 1.00000) at (2.30000, 1.00000);
	\coordinate (0.80000, 2.00000) at (0.80000, 2.00000);
	\fill[facet, fill=blue!80!white,fill opacity=0.800000] (0.80000, 2.00000) --  (1.00000, 1.00000)-- (0.00000,0.00000) -- cycle {};
	\fill[facet, fill=red!40!white,fill opacity=0.800000] (1.00000, 1.00000) --  (0.00000, 0.00000)-- (2.30000,1.00000) -- cycle {};
	
	\draw[edge] (0.00000, 0.00000) -- (1.00000, 1.00000);
	\draw[edge] (0.00000, 0.00000) -- (0.80000, 2.00000);
	\draw[edge] (0.00000, 0.00000) -- (2.30000, 1.00000);
	\draw[edge] (0.80000, 2.00000) -- (1.00000, 1.00000);
	\draw[edge] (1.00000, 1.00000) -- (2.30000, 1.00000);
	\draw[edge] (1.00000, 1.00000) -- (2.30000, 1.00000);
	\draw [dashed] (0.80000, 2.00000) -- (2.30000, 1.00000);
	\node[vertex] at (0.00000, 0.00000){};
	\node[vertex] at (1.00000, 1.00000){};
	\node[vertex] at (0.80000, 2.00000){};
	\node[vertex] at (2.30000, 1.00000){};
	\node at (2.00000, 0.50000){$\Delta_2$};
	\node at (0.00000, 1.20000){$\Delta_1$};
	\end{tikzpicture}
	\caption{Examples of simplices $\Delta_1$ and $\Delta_2$, which share a facet, $d=2$.} \label{fig:2simplicescommonfacet}
\end{figure}

\begin{ej}\label{example:cone} \rm (Example \ref{example:support} and \ref{example:simplices} 
continued). We compute the cone $\mathcal{C}_{\Delta_1,\Delta_2}$ for the simplices 
$\Delta_1$ and $\Delta_2$. We take any values $h_1$, $h_2$, $h_3$ 
corresponding to the points of $\Delta_1$: $(0,0)$, $(2,0)$ and $(0,1)$ respectively.

We consider then the unique affine linear function $\varphi_{\Delta_1,h}(x,y)$ which satisfies 
$\varphi_{\Delta_1,h}(0,0)=h_1$, $\varphi_{\Delta_1,h}(2,0)=h_2$ and $\varphi_{\Delta_1,h}(0,1)=h_3$:
 \[\varphi_{\Delta_1,h}(x,y)=\left(\frac{h_2-h_1}{2}\right)x + (h_3-h_1)\, y + h_1.\]
We need that
\begin{align*}
\varphi_{\Delta_1,h}(2,1)&=-h_1+h_2+h_3<h_4, \quad \varphi_{\Delta_1,h}(1,2)=-\frac{3}{2}\, h_1+\frac{1}{2}\, h_2+2\, h_3<h_5,\\
\varphi_{\Delta_1,h}(1,3)&=-\frac{5}{2}\, h_1+\frac{1}{2}\, h_2+3\, h_3<h_6.
\end{align*}
Then, we have the description
\begin{align*}
\mathcal{C}_{\Delta_1}=\{h=(h_1,\dots,h_6)\in\R^6 \, : \, \  h_1-h_2-h_3+h_4>0,\ 
&3h_1 - h_2 - 4h_3 + 2h_5>0,\\
&5h_1-h_2-6h_3+2h_6>0\}.
\end{align*}
In analogous way we can compute $\mathcal{C}_{\Delta_2}$:
\begin{align*}
\mathcal{C}_{\Delta_2}=\{h=(h_1,\dots,h_6)\in\R^6 \, : \, \  h_1-h_2-h_3+h_4>0,\ 
&2h_2 - h_3 - 3h_4 + 2h_5>0,\\
&4h_2-h_3-5h_4+2h_6>0\}.
\end{align*}
We observe that one of the inequalities appears twice, then $\mathcal{C}_{\Delta_1,\Delta_2}$ is defined by five inequalities. We can write:
\begin{align*}
\mathcal{C}_{\Delta_1,\Delta_2}=\{h=(h_1,\dots,h_6)\in\R^6 \, : \, \ \langle m_r, h \rangle  > 0 , \; r=1,\ldots,5\},
\end{align*}
where 
$m_1=(1,-1,-1,1,0,0)$, $m_2=(3,-1,-4,0,2,0)$, $m_3=(5,-1,-6,0,0,2)$, 
$m_4=(0,2,-1,-3,2,0)$,  $m_5=(0,4,-1,-5,0,2)$, and $\langle \, , \,  \rangle$ denotes the canonical inner product.
\end{ej}

\begin{remark} \rm Take any vector $h\in\R^n$ and consider the \textit{lower
convex hull} $\mathcal L$ of the $n$ lifted
points $(a_j,h_j)\in \R^{d+1}$, with $a_j$, $j=1, \dots, n$, in the support $\mathcal{A}$ (see Figure~\ref{fig:Regular triangulation}). Project to
 $\R^d$  the subsets of points  in each of the faces of $\mathcal L$. 
These subsets define a \textit{regular} 
subdivision  of $\mathcal{A}$ associated with $h$. When the lifting vector $h$ is generic, the regular
subdivision is a
 regular triangulation, in which all
 the subsets are simplices. 
Note that given a simplex $\Delta \subset \mathcal A$, the cone $\mathcal{C}_{\Delta}$ consists of the vectors $h$ for which
  the induced regular subdivision contains the simplex $\Delta$.
	

\end{remark}

  \begin{center}
	\begin{figure}[h]
				\centering
			\begin{tikzpicture}%
			[x={(1.595303cm, 0.021152cm)},
			y={(0.905733cm, 0.517961cm)},
			z={(-0.024093cm, 0.8770689cm)},
			scale=0.900000,
			back/.style={dotted,very thick},
			edge/.style={color=black, thick},
			facet/.style={fill=blue!95!black,fill opacity=0.800000},
			vertex/.style={inner sep=1.2pt,circle,draw=black,fill=black,thick,anchor=base}]
			%
			%
			\coordinate (0.00000, 0.00000, 0.00000) at (0.00000, 0.00000, 0.00000);
			\coordinate (3.00000, 0.00000, 0.00000) at (3.00000, 0.00000, 0.00000);
			\coordinate (3.00000, 2.00000, 0.00000) at (3.00000, 2.00000, 0.00000);
			\coordinate (0.00000, 2.00000, 0.00000) at (0.00000, 2.00000, 0.00000);
			\coordinate (0.00000, 0.00000, 2.50000) at (0.00000, 0.00000, 2.50000);
			\coordinate (3.00000, 0.00000, 2.50000) at (3.00000, 0.00000, 2.50000);
			\coordinate (3.00000, 2.00000, 2.50000) at (3.00000, 2.00000, 2.50000);
			\coordinate (0.00000, 2.00000, 2.50000) at (0.00000, 2.00000, 2.50000);
			\coordinate (2.00000, 1.00000, 0.00000) at (2.00000, 1.00000, 0.00000);
			\coordinate (1.00000, 0.70000, 0.00000) at (1.00000, 0.70000, 0.00000);
			\coordinate (1.00000, 1.30000, 0.00000) at (1.00000, 1.30000, 0.00000);
			\coordinate (2.00000, 1.00000, 1.00000) at (2.00000, 1.00000, 1.00000);
			\coordinate (1.00000, 0.70000, 1.30000) at (1.00000, 0.70000, 1.30000);
			\coordinate (1.00000, 1.30000, 2.30000) at (1.00000, 1.30000, 2.30000);
			
			

			\draw[edge,back] (0.00000, 0.00000, 0.00000) -- (0.00000, 0.00000, 2.50000);
			\draw[edge,back](0.00000, 2.00000, 0.00000) -- (0.00000, 2.00000, 2.50000);
			\draw[edge,back](3.00000, 2.00000, 0.00000) -- (3.00000, 2.00000, 2.50000);
			\fill[facet, fill=blue!,fill opacity=1.000000] (0.00000, 2.00000, 2.50000) -- (1.00000, 0.70000, 1.30000) -- (2.00000, 1.00000, 1.00000) -- cycle {};
			\fill[facet, fill=green!,fill opacity=1.000000] (0.00000, 2.00000, 2.50000) -- (3.00000, 2.00000, 2.50000) -- (2.00000, 1.00000, 1.00000) -- cycle {};
			\fill[facet, fill=violet!80!white,fill opacity=1.000000] (0.00000, 2.00000, 2.50000) -- (1.00000, 0.70000, 1.30000) -- (0.00000, 0.00000, 2.50000) -- cycle {};
			\draw[edge] (0.00000, 2.00000, 2.50000) -- (2.00000, 1.00000, 1.00000);
			\fill[facet, fill=blue!50!white,fill opacity=1.000000] (0.00000, 2.00000, 0.00000) -- (1.00000, 0.70000, 0.00000) -- (2.00000, 1.00000, 0.00000) -- cycle {};
			\fill[facet, fill=green!50!white,fill opacity=1.000000] (0.00000, 2.00000, 0.00000) -- (3.00000, 2.00000, 0.00000) -- (2.00000, 1.00000, 0.00000) -- cycle {};
			\fill[facet, fill=violet!50!white,fill opacity=1.000000] (0.00000, 2.00000, 0.00000) -- (1.00000, 0.70000, 0.00000) -- (0.00000, 0.00000, 0.00000) -- cycle {};
			\draw[edge,back](1.00000, 1.30000, 0.00000) -- (1.00000, 1.30000, 2.30000);
			\draw[edge] (1.00000, 0.70000, 1.30000) -- (0.00000, 2.00000, 2.50000);
			\fill[facet, fill=yellow!,fill opacity=1.000000] (2.00000, 1.00000, 1.00000) -- (3.00000, 2.00000, 2.50000) -- (3.00000, 0.00000, 2.50000) -- cycle {};
			\fill[facet, fill=red!,fill opacity=1.000000] (0.00000, 0.00000, 2.50000) -- (1.00000, 0.70000, 1.30000) -- (3.00000, 0.00000, 2.50000) -- cycle {};
			\fill[facet, fill=orange!,fill opacity=1.000000] (3.00000, 0.00000, 2.50000) -- (1.00000, 0.70000, 1.30000) -- (2.00000, 1.00000, 1.00000) -- cycle {};

			\fill[facet, fill=yellow!50!white,fill opacity=1.000000] (2.00000, 1.00000, 0.00000) -- (3.00000, 2.00000, 0.00000) -- (3.00000, 0.00000, 0.00000) -- cycle {};
			\fill[facet, fill=red!50!white,fill opacity=1.000000] (0.00000, 0.00000, 0.00000) -- (1.00000, 0.70000, 0.00000) -- (3.00000, 0.00000, 0.00000) -- cycle {};
			\fill[facet, fill=orange!50!white,fill opacity=1.000000] (3.00000, 0.00000, 0.00000) -- (1.00000, 0.70000, 0.00000) -- (2.00000, 1.00000, 0.00000) -- cycle {};
			\draw[edge] (0.00000, 0.00000, 2.50000) -- (3.00000, 0.00000, 2.50000);
			\draw[edge] (0.00000, 0.00000, 2.50000) -- (1.00000, 0.70000, 1.30000);
			\draw[edge] (3.00000, 0.00000, 2.50000) -- (1.00000, 0.70000, 1.30000);
			\draw[edge] (2.00000, 1.00000, 1.00000) -- (1.00000, 0.70000, 1.30000);
			\draw[edge] (3.00000, 0.00000, 2.50000) -- (2.00000, 1.00000, 1.00000);
			\draw[edge] (0.00000, 0.00000, 0.00000) -- (3.00000, 0.00000, 0.00000);
			\draw[edge] (0.00000, 0.00000, 0.00000) -- (0.00000, 2.00000, 0.00000);
			\draw[edge] (3.00000, 0.00000, 0.00000) -- (3.00000, 2.00000, 0.00000);
			\draw[edge] (0.00000, 2.00000, 0.00000) -- (3.00000, 2.00000, 0.00000);
			\draw[edge] (3.00000, 0.00000, 0.00000) -- (2.00000, 1.00000, 0.00000);
			\draw[edge] (3.00000, 2.00000, 0.00000) -- (2.00000, 1.00000, 0.00000);
			\draw[edge] (0.00000, 2.00000, 0.00000) -- (2.00000, 1.00000, 0.00000);
			\draw[edge] (1.00000, 0.70000, 0.00000) -- (2.00000, 1.00000, 0.00000);
			\draw[edge] (1.00000, 0.70000, 0.00000) -- (3.00000, 0.00000, 0.00000);
			\draw[edge] (1.00000, 0.70000, 0.00000) -- (0.00000, 0.00000, 0.00000);
			\draw[edge] (1.00000, 0.70000, 0.00000) -- (0.00000, 2.00000, 0.00000);
			\draw[edge,back] (3.00000, 0.00000, 2.50000) -- (3.00000, 0.00000, 0.00000);
			\draw[edge,back](2.00000, 1.00000, 0.00000) -- (2.00000, 1.00000, 1.00000);
			\draw[edge,back] (1.00000, 0.70000, 0.00000) -- (1.00000, 0.70000, 1.30000);
			\draw[edge] (3.00000, 0.00000, 2.50000) -- (3.00000, 2.00000, 2.50000);
			\draw[edge] (3.00000, 2.00000, 2.50000) -- (2.00000, 1.00000, 1.00000);
			\draw[edge] (0.00000, 0.00000, 2.50000) -- (0.00000, 2.00000, 2.50000);
			\draw[edge] (3.00000, 2.00000, 2.50000) -- (0.00000, 2.00000, 2.50000);
			\node[vertex] at (0.00000, 0.00000, 0.00000)     {};
			\node at (-0.2,0,0) {$a$};
			\node[vertex] at (0.00000, 2.00000, 0.00000)     {};
			\node[vertex] at (3.00000, 0.00000, 0.00000)     {};
			\node[vertex] at (3.00000, 2.00000, 0.00000)     {};
			\node at (4.05,0.6,0) {$\R^d$};
			\node at (4.2,0.6,2) {$\R^{d+1}$};
			\node[vertex] at (0.00000, 0.00000, 2.50000)     {};
			\node[vertex] at (0.00000, 2.00000, 2.50000)     {};
			\node[vertex] at (3.00000, 0.00000, 2.50000)     {};
			\node at (-0.6,0,2.5) {$(a,h(a))$};
			\node[vertex] at (3.00000, 2.00000, 2.50000)     {};
			\node[vertex] at (2.00000, 1.00000, 0.00000)   	 {};
			\node[vertex] at (1.00000, 1.30000, 0.00000) 	 {};
			\node[vertex] at (1.00000, 0.70000, 0.00000)	 {};
			\node[vertex] at (2.00000, 1.00000, 1.00000)   	 {};
			\node[vertex] at (1.00000, 0.70000, 1.30000) 	 {};
			\node[vertex] at (1.00000, 1.30000, 2.30000)	 {};

			\end{tikzpicture}
			\caption{Regular triangulation.} \label{fig:Regular triangulation}
			\vspace{\baselineskip}
	\end{figure}
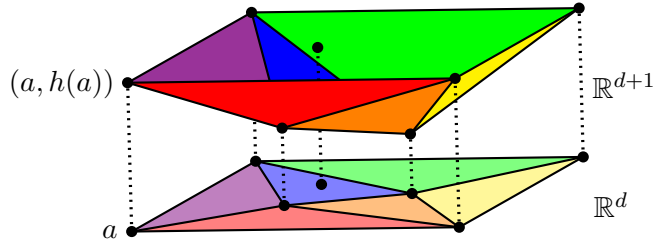
\end{center}

Given any $h\in\R^n$, consider the following family of polynomial systems parametrized
 by a positive real number $t$, which coincides for $t=1$ with
the system defined by the polynomials in~\eqref{systemf}:
\begin{equation}\label{systemwitht}
f_{1,t}(x)=\dots=f_{d,t}(x)=0,
\end{equation}
where 
\begin{equation*}
f_{i,t}(x)=\sum_{j=1}^n c_{ij}\, t^{h_j} \, x^{a_j} \in \R[x_1,\dots,x_d], \ i=1,\dots,d.
\end{equation*}
For each positive real value of $t$, this system has again support included in $\mathcal{A}$. Recall that
a common root of~\eqref{systemwitht} is nondegenerate when it is not a zero of the Jacobian of $f_{1,t}, \dots, f_{d,t}$.
The following result is a particular case of Theorem 3.4 in~\cite{bihan}.
 
 \begin{thm} \label{th:BS}
Let $\mathcal{A}=\{a_1, \dots, a_{n}\} \subset \Z^d$ be a finite point configuration and 
$C=(c_{ij})\in \R^{d\times n}$ a matrix. Let $\Delta_1, \Delta_2$ be two $d$-simplices with vertices 
in $\mathcal{A}$ which share a facet, and which are positively decorated by the matrix $C$. 
Let $h$ be any vector in the cone $\mathcal{C}_{\Delta_1,\Delta_2}$.
Then, there exists $t_0\in\R_{>0}$ such that for all $0<t<t_0$, the number of (nondegenerate) solutions of (\ref{systemwitht}) 
 contained in the positive orthant is at least two.
 \end{thm}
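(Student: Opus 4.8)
The plan is to produce one positive nondegenerate solution from each of the two positively decorated simplices by a toric degeneration argument, and then to verify that the two resulting solutions are genuinely distinct once $t$ is small. The guiding idea is that as $t \to 0^+$ the lowest-order terms in the parameter $t$ dominate, and the cone condition $h \in \mathcal{C}_{\Delta_1} \cap \mathcal{C}_{\Delta_2}$ is exactly what makes the monomials supported on each $\Delta_i$ dominate after an appropriate rescaling.

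First I would fix one simplex $\Delta \in \{\Delta_1, \Delta_2\}$ and write its affine function as $\varphi_{\Delta,h}(a) = \langle \alpha, a \rangle + \beta$ with $\alpha \in \R^d$, $\beta \in \R$. I would then perform the monomial substitution $x_k = t^{-\alpha_k} z_k$ and multiply the $i$-th equation by $t^{\beta}$. Since $x^{a_j} = t^{-\langle \alpha, a_j \rangle} z^{a_j}$, the coefficient of $z^{a_j}$ in the rescaled family becomes $c_{ij}\, t^{\,h_j - \varphi_{\Delta,h}(a_j)}$. For $a_j \in \Delta$ this exponent vanishes, while the defining inequalities of $\mathcal{C}_{\Delta}$ give $h_j - \varphi_{\Delta,h}(a_j) > 0$ for every $a_j \notin \Delta$. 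Hence, letting $t \to 0^+$, the rescaled family converges coefficientwise to the subsystem supported on $\Delta$, namely $\sum_{a_j \in \Delta} c_{ij}\, z^{a_j} = 0$ for $i = 1, \dots, d$.

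The limit subsystem has a single $d$-simplex as support, and its coefficient matrix is precisely the $d \times (d+1)$ submatrix of $C$ indexed by $\Delta$, which is positively spanning because $\Delta$ is positively decorated. By Proposition~3.3 of~\cite{bihan} this limit system has a nondegenerate positive solution $z^\star$. Nondegeneracy means the Jacobian is invertible at $z^\star$, so I would apply the implicit function theorem to the rescaled family (analytic in $t$ for $t > 0$ and extending continuously to $t = 0$): for all small enough $t > 0$ there is a positive solution $z(t) \to z^\star$ that stays nondegenerate. Undoing the substitution, $x^{(\Delta)}(t) := (t^{-\alpha_k} z_k(t))_k$ is a nondegenerate positive solution of the family~\eqref{systemwitht}. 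Carrying this out for both $\Delta_1$ and $\Delta_2$ yields solutions $x^{(\Delta_1)}(t)$ and $x^{(\Delta_2)}(t)$.

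The hard part will be showing these two solutions differ for small $t$, and this is where the strict cone inequalities must be exploited. Writing $\varphi_{\Delta_i,h}(a) = \langle \alpha^{(i)}, a \rangle + \beta^{(i)}$, both affine functions agree with $h$ on the $d$ lattice points of the shared facet, so they coincide on a $(d-1)$-dimensional affine subspace. If $v$ denotes the vertex of $\Delta_2$ off this facet, the condition $h \in \mathcal{C}_{\Delta_1}$ forces $\varphi_{\Delta_1,h}(v) < h_v = \varphi_{\Delta_2,h}(v)$, so the two affine functions are distinct; combined with their agreement on a $(d-1)$-dimensional subspace, this rules out $\alpha^{(1)} = \alpha^{(2)}$ and yields $\alpha^{(1)} \neq \alpha^{(2)}$. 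Consequently $x^{(\Delta_1)}(t)$ and $x^{(\Delta_2)}(t)$ have different orders in $t$ in at least one coordinate and are therefore distinct for $t$ small. This produces at least two nondegenerate positive solutions, as asserted. The degeneration alone only gives a solution associated to each simplex; the essential obstacle is this asymptotic separation, which the cone condition supplies.
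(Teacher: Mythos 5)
Your argument is correct, and it reconstructs essentially the proof that the paper implicitly relies on: the paper does not prove Theorem~\ref{th:BS} itself but cites it as a particular case of Theorem~3.4 in~\cite{bihan}, whose proof is exactly this Viro-type toric degeneration (rescale by the affine function of each positively decorated simplex so that the cone inequalities make the off-simplex terms vanish as $t\to 0^+$, solve the limiting simplex system via Proposition~3.3 of~\cite{bihan}, lift by the implicit function theorem, and separate the two branches by the distinct linear parts $\alpha^{(1)}\neq\alpha^{(2)}$ coming from the shared-facet geometry). The only blemish is a sign slip: with $\varphi_{\Delta,h}(a)=\langle\alpha,a\rangle+\beta$ you must multiply the equations by $t^{-\beta}$, not $t^{\beta}$, to obtain the exponents $h_j-\varphi_{\Delta,h}(a_j)$; this does not affect the structure or validity of the argument.
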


\begin{ej}\rm (Example \ref{example:support}, \ref{example:simplices} and \ref{example:cone} continued).
 \rm The simplices $\Delta_1$ and $\Delta_2$ are positively decorated by $C$ and share a facet. 
 Then, if we take $h\in\mathcal{C}_{\Delta_1,\Delta_2}$, there exists $t_0\in\R_{>0}$ such that for 
 all $0<t<t_0$, the number of (nondegenerate) solutions of the deformed system
\begin{eqnarray}\label{eq:systemexample21}
t^{h_1} - t^{h_2}2x^2 + t^{h_3}y + t^{h_4}x^2y - t^{h_5}xy^2 &=\ 0,\\
-t^{h_1}2 + t^{h_2}x^2 - t^{h_4}x^2y - t^{h_5}xy^2 + t^{h_6}xy^3 &=\ 0, \nonumber
\end{eqnarray}
is at least two.

Indeed, it is easy to check that the simplices 
$\Delta_4=\{(0,1),(1,2),(1,3)\}$ and $\Delta_5=\{(2,1),(1,2),(1,3)\}$ in Figure~\ref{fig:simplices} 
are also positively decorated by the matrix $C$. In fact,  by Theorem~2.9 in~\cite{AGB1},  for any
 $h\in \mathcal{C}_{\Delta_1}\cap \mathcal{C}_{\Delta_2}\cap \mathcal{C}_{\Delta_4}\cap
 \mathcal{C}_{\Delta_5}$ there exists $t_0$ such that if $0<t<t_0$ the system \eqref{eq:systemexample21}
  has at least four positive solutions. Here four is the number of simplices which are positively decorated. 
  For example if we take $h_1=1$, $h_2=h_3=h_4=0$, $h_5=1$ and $h_6=3$, we obtain the regular
   triangulation in Figure~\ref{fig:simplices}, and if we choose $t=1/12$, system (\ref{systemwitht})
    has four positive solutions. This fact can be checked using the free Computer
Algebra System Singular~\cite{Singular} 
    with the library ``signcond.lib" implemented by E. Tobis, with the following code:
\begin{verbatim}
> LIB "signcond.lib";
> ring r= 0, (x,y), dp;
> ideal i = 1/12-2*x^2+y+x^2*y-(1/12)*x*y^2, 
-2*(1/12)+x^2-x^2*y-(1/12)*x*y^2+(1/12)^3*x*y^3;
> ideal j = std(i);
> firstoct(j);
4
\end{verbatim}
Note that this procedure is symbolic and thus certified, as opposed to numeric
algorithms to compute the roots which can be affected by numerical unstability. 
It is based on the algorithms described in~\cite{BPR}.

\end{ej}

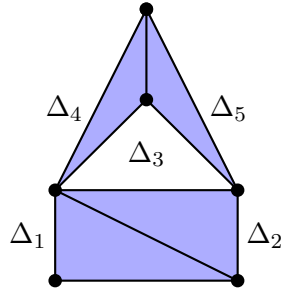
\begin{figure}
	\centering
	\begin{tikzpicture}
	[scale=1.200000,
	back/.style={loosely dotted, thin},
	edge/.style={color=black, thick},
	facet/.style={fill=red!70!white,fill opacity=0.800000},
	vertex/.style={inner sep=1.5pt,circle,draw=black,fill=black,thick,anchor=base
	}]
	%
	%
	\coordinate (0.00000, 0.00000) at (0.00000, 0.00000);
	\coordinate (0.00000, 1.00000) at (0.00000, 1.00000);
	\coordinate (2.00000, 0.00000) at (2.00000, 0.00000);
	\coordinate (2.00000, 1.00000) at (2.00000, 1.00000);
	\coordinate (1.00000, 2.00000) at (1.00000, 2.00000);
	\coordinate (1.00000, 3.00000) at (1.00000, 3.00000);
	\fill[facet, fill=blue!40!white,fill opacity=0.800000] (0.00000, 0.00000) --  (2.00000, 0.00000)-- (0.00000, 1.00000) -- cycle {};
	\fill[facet, fill=blue!40!white,fill opacity=0.800000] (2.00000, 0.00000)-- (0.00000, 1.00000) -- (2.00000, 1.00000) -- cycle {};
	\fill[facet, fill=blue!40!white,fill opacity=0.800000] (1.00000, 2.00000)-- (0.00000, 1.00000) -- (1.00000, 3.00000) -- cycle {};
	\fill[facet, fill=blue!40!white,fill opacity=0.800000] (1.00000, 2.00000)-- (2.00000, 1.00000) -- (1.00000, 3.00000) -- cycle {};
	
	\draw[edge] (0.00000, 0.00000) -- (0.00000, 1.00000);
	\draw[edge] (2.00000, 0.00000) -- (0.00000, 1.00000);
	\draw[edge] (2.00000, 1.00000) -- (0.00000, 1.00000);
	\draw[edge] (2.00000, 1.00000) -- (1.00000, 2.00000);
	\draw[edge] (1.00000, 3.00000) -- (1.00000, 2.00000);
	\draw[edge] (0.00000, 1.00000) -- (1.00000, 2.00000);
	\draw[edge] (0.00000, 0.00000) -- (2.00000, 0.00000);
	\draw[edge] (2.00000, 0.00000) -- (2.00000, 1.00000);
	\draw[edge] (2.00000, 1.00000) -- (1.00000, 3.00000);
	\draw[edge] (1.00000,3.00000) -- (0.00000, 1.00000);
	\node[vertex] at (0.00000, 0.00000){};
	\node[vertex] at (0.00000, 1.00000){};
	\node[vertex] at (2.00000, 0.00000){};
	\node[vertex] at (2.00000, 1.00000){};
	\node[vertex] at (1.00000, 2.00000){};
	\node[vertex] at (1.00000, 3.00000){};
	\node[vertex] at (1.00000, 3.00000){};
	\node at (-0.30000, 0.50000){$\Delta_1$};
	\node at (2.30000, 0.50000){$\Delta_2$};
    \node at (1.00000, 1.40000){$\Delta_3$};
	\node at (0.10000, 1.90000){$\Delta_4$};
	\node at (1.90000, 1.90000){$\Delta_5$};
	\end{tikzpicture}
	\caption{Simplices of $\mathcal{A}$ in Example~\ref{example:support}, which are positively decorated by the matrix $C$.}
	\label{fig:simplices}
\end{figure}

  We now state a similar result, but here we describe a subset  with nonempty interior in the space of coefficients
 where we can find at least two positive solutions of the associated system. This is
 a simplified version of Theorem~2.11 in~\cite{AGB1}.

 \begin{thm}\label{thm:main2}
 Consider a set $\mathcal{A}=\{a_1, \dots, a_{n}\}$ of $n$ points in $\R^d$ and 
 a matrix $C=(c_{i,j}) \in \R^{d \times n}$.  Assume  there are two $d$-simplices $\Delta_1$, $\Delta_2$ with vertices in
 $\mathcal{A}$, which share a facet
and are positively decorated by the matrix $C$.
 Assume that the cone ${\mathcal C}_{\Delta_1,\Delta_2}$ is defined by the inequalities
 \begin{equation} \label{E:for2}
 \langle m_r, h \rangle  > 0 , \; r=1,\ldots,\ell,
 \end{equation}
 where $m_r=(m_{r,1},\ldots,m_{r,n}) \in \R^n$.

Then, there exists constants $M_1,\dots,M_{\ell}>0$ such that for any $\gamma$ in the open set
\[ U \, = \{ \gamma \in \R_{>0}^n \, : \, \gamma^{m_r} < M_r, \, r=1 \dots,\ell\},\]
 the system 
 \begin{equation}\label{systemgamma}
 \sum_{j=1}^n c_{ij}\gamma_{j}x^{a_j}=0 , \;  \ i=1,\dots,d,
 \end{equation}
 has at least $2$ nondegenerate solutions in the positive orthant.
  \end{thm}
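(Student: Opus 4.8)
The plan is to reduce the statement to Theorem~\ref{th:BS} by exploiting a torus symmetry of the solution count together with the homogeneity built into the monomials $\gamma^{m_r}$, and then to upgrade the one-parameter family $t\mapsto t^h$ of Theorem~\ref{th:BS} into a full open region by a uniform threshold estimate.

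First I would record the relevant invariance. For $\lambda\in\R_{>0}^d$ the monomial substitution $x_k\mapsto \lambda_k x_k$ is a diffeomorphism of the positive orthant carrying solutions of system~\eqref{systemgamma} to solutions of the same system with $\gamma_j$ replaced by $\gamma_j\lambda^{a_j}$, and preserving nondegeneracy; multiplying all $d$ equations by a common $\mu>0$ replaces $\gamma_j$ by $\mu\gamma_j$ without changing the zero set. Writing $u=(\log\gamma_1,\dots,\log\gamma_n)$, these operations translate $u$ by the subspace $W\subset\R^n$ spanned by $(1,\dots,1)$ and by the $d$ rows of the $d\times n$ matrix with columns $a_1,\dots,a_n$; hence the number of nondegenerate positive solutions of~\eqref{systemgamma} depends only on the class of $u$ modulo $W$. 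Next I would observe that each normal $m_r$ of $\mathcal C_{\Delta_1,\Delta_2}$ lies in $W^\perp$: the cones $\mathcal C_{\Delta_i}$, and therefore their intersection, are unchanged when $h$ is modified by an affine function evaluated on $\mathcal A$ (this shifts $\varphi_{\Delta_i,h}(a_j)$ and $h_j$ by the same amount), so the defining inequalities $\langle m_r,h\rangle>0$ must be constant on $W$-cosets. Consequently $\gamma^{m_r}=e^{\langle m_r,u\rangle}$ is a torus invariant, and both the set $U$ and the \emph{good set} $\mathcal G$, defined as the set of $\gamma\in\R_{>0}^n$ for which system~\eqref{systemgamma} has at least two nondegenerate positive solutions, are unions of torus orbits. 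Thus everything descends to the quotient $\R^n/W\cong W^\perp$, where $\overline U=\{\bar u : \langle m_r,\bar u\rangle<\log M_r\}$ and $-\overline{\mathcal C_{\Delta_1,\Delta_2}}=\{\bar u:\langle m_r,\bar u\rangle<0\}$.

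With this in hand I would connect to Theorem~\ref{th:BS}. Along the ray $\gamma_j=t^{h_j}$ with $h\in\mathcal C_{\Delta_1,\Delta_2}$ one has $\gamma^{m_r}=t^{\langle m_r,h\rangle}$ with positive exponent, so small $t$ is exactly membership deep inside $U$; and by Theorem~\ref{th:BS} (whose hypotheses hold since $\Delta_1,\Delta_2$ share a facet and are positively decorated, the cone $\mathcal C_{\Delta_1,\Delta_2}$ being nonempty) the system has at least two nondegenerate positive solutions once $t<t_0(h)$. In the quotient this says that for every direction $\bar h$ in the projected cone, the far part of its ray lies in $\mathcal G$. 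Moreover $\mathcal G$ is open: a nondegenerate positive solution lies in the open orthant and persists, by the implicit function theorem, under a $C^1$-small perturbation of the coefficients, which depend continuously on $\gamma$; two distinct such solutions stay distinct. It therefore suffices to produce some $\bar u^{*}\in W^\perp$ with $\bar u^{*}-\overline{\mathcal C_{\Delta_1,\Delta_2}}\subseteq\mathcal G$, because $\bar u^{*}-\overline{\mathcal C_{\Delta_1,\Delta_2}}=\{\bar u:\langle m_r,\bar u\rangle<\langle m_r,\bar u^{*}\rangle\}$ is exactly a region of the form $\overline U$ with $\log M_r=\langle m_r,\bar u^{*}\rangle$, i.e.\ $M_r=e^{\langle m_r,\bar u^{*}\rangle}>0$. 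So the claim reduces to showing that $\mathcal G$ contains a full translate of the opposite cone.

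The main obstacle is precisely this last point: Theorem~\ref{th:BS} supplies, for each height direction, only a direction-dependent threshold $t_0(h)$, whereas a translated copy of $-\overline{\mathcal C_{\Delta_1,\Delta_2}}$ contains points whose direction approaches the boundary of the cone, where $t_0(h)$ could a priori degenerate. To overcome this I would reexamine the quantitative deformation argument behind Theorem~\ref{th:BS} (Theorem~3.4 in~\cite{bihan}) rather than use it as a black box: because $\Delta_1$ and $\Delta_2$ merely share a facet, the two positive solutions arise from a local dominant-balance / Viro patchworking in which the contribution of each monomial $a_j\notin\Delta_1\cup\Delta_2$ enters only through the torus invariants $\gamma^{m_r}$. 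I would track the error terms to show that this construction goes through uniformly as soon as each $\gamma^{m_r}$ is smaller than an explicit constant $M_r$ depending only on $\mathcal A$ and on the entries of $C$ supported on $\Delta_1\cup\Delta_2$; taking these constants then gives $U\subseteq\mathcal G$. This quantitative uniform estimate is the step carried out in full in Theorem~2.11 of~\cite{AGB1}, and the reduction above isolates it as the one genuinely quantitative ingredient.
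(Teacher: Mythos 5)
A preliminary remark: the paper contains no proof of Theorem~\ref{thm:main2} at all --- it is introduced as ``a simplified version of Theorem~2.11 in~\cite{AGB1}'' and used as a black box --- so the real comparison is between your attempt and the proof in the companion paper. That said, your reduction is correct as far as it goes. The invariance of the positive nondegenerate solution count of~\eqref{systemgamma} under $\gamma_j\mapsto\mu\,\gamma_j\,\lambda^{a_j}$ is right; the subspace $W$ spanned by $(1,\dots,1)$ and the rows of the exponent matrix is exactly the set of vectors $\bigl(\psi(a_1),\dots,\psi(a_n)\bigr)$ with $\psi$ affine; the cone $\mathcal{C}_{\Delta_1,\Delta_2}$ is invariant under translation by $W$ (adding $\psi\circ a$ to $h$ adds $\psi$ to each $\varphi_{\Delta_i,h}$), hence each $m_r\in W^\perp$; and the set $U$ with $\log M_r=\langle m_r,u^*\rangle$ is precisely the translate $u^*-\mathcal{C}_{\Delta_1,\Delta_2}$ in logarithmic coordinates. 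You are also right to flag that Theorem~\ref{th:BS} cannot finish the argument: it yields a direction-dependent threshold $t_0(h)$, and a full translate of $-\mathcal{C}_{\Delta_1,\Delta_2}$ contains points whose directions approach the boundary of the cone, where a compactness argument over directions fails because the induced subdivision no longer contains both simplices.

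The gap is that the one step carrying the entire content of the theorem --- the uniform estimate showing that the deformation argument succeeds as soon as each torus invariant $\gamma^{m_r}$ lies below an explicit bound --- is not carried out. You announce that you ``would track the error terms'' and then attribute the conclusion to Theorem~2.11 of~\cite{AGB1}, which is the very statement being proven (in slightly greater generality); as a self-contained argument this is circular at the decisive point, since everything before it is a clean repackaging of the statement and everything after it \emph{is} the statement. Given that the paper itself offers only the citation, your attempt is no less complete than the paper's treatment, and the torus-invariance framing is a genuine clarification of why Theorem~\ref{thm:main2} is a strengthening of Theorem~\ref{th:BS} rather than a corollary of it; but without the quantitative patchworking estimate it does not constitute a proof.
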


\begin{remark} \rm
Note that the choice of the positive constants $M_1, \dots, M_{\ell}$ is {\it not} algorithmic,
 but we describe an open set in coefficient space for which more than 
 one positive solution can be found. Furthermore,  inequalities~\eqref{E:for2} 
 indicate how to scale the coefficients of the system  in order to get  at least two positive solutions.
\end{remark}

 \section{Enzymatic cascades with two layers}\label{sec:2}
    
In this section we work with the case of an enzymatic cascade with two layers, 
and then in Section~\ref{sec:3} we will work with the general case.
The network involves two phosphorylation cycles.  We call $S_1$ and $S_2$
 the substrate proteins in the first and second layers respectively. The upper index can be interpreted 
 as the absence ($0$) or the presence ($1$) of a phosphate group. 
 The phosphorylation in the first layer is catalyzed by the enzyme $E$. The activated protein $S^1_1$ 
 in the first layer acts as the modifier enzyme in the second layer, which is depicted in (A) in Figure~\ref{fig:sadi}.
 
 Note that the dephosphorylation is carried out by the same phosphatase $F$, 
 which as  we pointed out in the Introduction 
 gives the capacity for multistationarity to the network by~\cite{feliu}. 
 The kinetics of this network is deduced by applying the law of mass-action to the following labeled digraph:  
 \begin{eqnarray} \label{eq:cascade2}
 S^0_1 + E 
 \arrowschem{k_{\rm{on}_1}}{k_{\rm{off}_1}} Y^0_1
 \stackrel{k_{\rm{cat}_1}}{\rightarrow} S^1_1+E & \ \ &
 S^0_2 + S^1_1 \arrowschem{k_{\rm{on}_{2}}}{k_{\rm{off}_{2}}} Y^0_2
 \stackrel{k_{\rm{cat}_{2}}}{\rightarrow} S^1_2 + S^1_1 \\
 \nonumber 
 S^1_1 + F \arrowschem{\ell_{\rm{on}_{1}}}{\ell_{\rm{off}_{1}}} Y^1_1
 \stackrel{\ell_{\rm{cat}_{1}}}{\rightarrow} S^0_1+ F
 & \ \  &
 S^1_2 + F \arrowschem{\ell_{\rm{on}_{2}}}{\ell_{\rm{off}_{2}}} Y^1_2
 \stackrel{\ell_{\rm{cat}_{2}}}{\rightarrow} S^0_2+ F.
 \end{eqnarray}

 We denote by $Y^0_1$, $Y^0_2$, $Y^1_1$, $Y^1_2$ the intermediate complexes, which consist of a 
 single chemical species formed by the union of the substrate with the enzyme. 
 The concentrations of the species will be denoted with small letters, for example $s^0_1$ will denote the concentration of $S^0_1$.
 The associated dynamical system that arises under mass-action kinetics equals:
 \vskip -10pt
 \begin{align*}
 \frac{ds^0_1}{dt} = & {-k_{\rm{on}_1}}s^0_1e + {k_{\rm{off}_1}}y^0_1 + {\ell_{\rm{cat}_1}} y^1_1, & \frac{dy^1_1}{dt} = & 
 {\ell_{\rm{on}_1}}s^1_1 f -{(\ell_{\rm{off}_1}+\ell_{\rm{cat}_1})}y^1_1, \\
 \frac{ds^1_1}{dt} = & {k_{\rm{cat}_1}}y^0_1 - {\ell_{\rm{on}_{1}}}s^1_1f + {\ell_{\rm{off}_{1}}}y^1_1    & \frac{dy^0_2}{dt} = 
 &{k_{\rm{on}_2}}s^0_2s^1_1  -{(k_{\rm{off}_2}+k_{\rm{cat}_2})}y^0_2,\\
  +&{k_{\rm{on}_{2}}}s^0_2s^1_1 + ({k_{\rm{off}_{2}}} +
   {k_{\rm{cat}_2}})y^0_2, & \frac{dy^1_2}{dt} = & {\ell_{\rm{on}_2}}s^1_2 f 
 -{(\ell_{\rm{off}_2}+\ell_{\rm{cat}_2})}y^1_2, \\
 \frac{ds^0_2}{dt} = & {-k_{\rm{on}_2}}s^0_2s^1_1 + {k_{\rm{off}_2}}y^0_2 +
  {\ell_{\rm{cat}_2}} y^1_2, &  \frac{de}{dt} = 
 &-{k_{\rm{on}_1}}s^0_1 e + {(k_{\rm{off}_1}+ k_{\rm{cat}_1})}y^0_1, \\
 \frac{ds^1_2}{dt} = & {k_{\rm{cat}_2}}y^0_2 - {\ell_{\rm{on}_{2}}}s^1_2f + 
 {\ell_{\rm{off}_{2}}}y^1_2, & \frac{df}{dt} = & - 
 {\ell_{\rm{on}_{1}}}s^1_1f   + {(\ell_{\rm{off}_1}+\ell_{\rm{cat}_1})}y^1_1\\
 \frac{dy^0_1}{dt} = & {k_{\rm{on}_1}}s^0_1 e -{(k_{\rm{off}_1}+
 k_{\rm{cat}_1})}y^0_1, & & -{\ell_{\rm{on}_2}}s^1_2 
 f+{(\ell_{\rm{off}_2}+\ell_{\rm{cat}_2})}y^1_2.\\
 \end{align*}
 
 In this case, there is a basis of the conservation laws given by the four linear equations:
 \begin{align}\label{eq:conscascada2}
 e + y^0_1 = & E_{tot}, \nonumber \\
 f + y^1_1 + y^1_2 = & F_{tot},\\
 s^0_1 + s^1_1 + y^0_1 + y^1_1 + y^0_2 = & S_{1,tot}, \nonumber\\
 s^0_2 + s^1_2 + y^0_2 + y^1_2 = & S_{2,tot}. \nonumber
 \end{align}

Enzymatic cascades are an example of \emph{$s$-toric MESSI networks},
defined in \cite{aliciaMer}. By
Theorem 3.5 in \cite{aliciaMer} we can find binomial equations that describe the steady
states. This is a general procedure, that in this case is easily obtained by manipulating 
 the differential equations. First, the concentrations of the intermediates 
 species $y^0_1, y^1_1, y^0_2, y^1_2$ at steady state satisfy the following binomial equations:
 \begin{eqnarray}\label{eq:intermedios}
  y^0_1 \, - \, K_1 \, es^0_1 =0, & & y^1_1\, - \, L_1 \, fs^1_1 =0,\\
 y^0_2\, - \, K_2 \, s^1_1s^0_2 =0, & &  y^1_2\, - \,  L_2 \, fs^1_2 =0,\nonumber
 \end{eqnarray}
where $K_1=\frac{k_{\rm{on}_1}}{k_{\rm{off}_1}+k_{\rm{cat}_1}}$, 
$K_2=\frac{k_{\rm{on}_2}}{k_{\rm{off}_2}+k_{\rm{cat}_2}}$, $L_1=
\frac{\ell_{\rm{on}_1}}{\ell_{\rm{off}_1}+\ell_{\rm{cat}_1}}$ and  
$L_2=\frac{\ell_{\rm{on}_2}}{\ell_{\rm{off}_2}+\ell_{\rm{cat}_2}}$ ($K_1^{-1}$, 
$K_2^{-1}$, $L_1^{-1}$ and $L_2^{-1}$ are usually called \textit{Michaelis-Menten constants}).
The whole steady state variety can be cut out in the positive orthant by adding to
the binomials in \eqref{eq:intermedios}, the following binomial equations:  
 \[\tau_1\, s^0_1\, e - \nu_1\, s^1_1\, f =0, \quad \tau_2\, s^0_2\, s^1_1 - \nu_2\, s^1_2\, f =0,\]
 where $\tau_1 = k_{\rm{cat}_1}\, K_1$, $\tau_2=k_{\rm{cat}_2}\, K_2$, 
 $\nu_1=\ell_{\rm{cat}_1}\, L_1$ and $\nu_2=\ell_{\rm{cat}_2}\, L_2$. 
 
 Therefore, we can parametrize the positive steady states by monomials. For instance, 
 we can write the concentration at steady state of $s^0_1, s^0_2$  and the intermediate species, 
 in terms of the species $(e, f, s^1_1, s^1_2)$:
 \begin{align}\label{parametrizationcascade2}
&s^0_1=G_1\, \frac{s^1_1\, f}{e}, \qquad y^0_1=K_1\, G_1\, s^1_1\, f, \qquad y^1_1=L_1\, s^1_1\, f,\\
&s^0_2=G_2\, \frac{s^1_2\, f}{s^1_1}, \qquad y^0_2=K_2\, G_2\, s^1_2\, f, \qquad y^1_2=L_2\, s^1_2\, f, \nonumber
 \end{align}
where $G_1=\frac{\nu_1}{\tau_1}$ and $G_2=\frac{\nu_2}{\tau_2}$.

Now, we apply our results to this case.  Denote by
\begin{equation}\label{eq:A12}
A_1 =\frac{\ell_{\rm{cat}_1}}{k_{\rm{cat}_1}}, \quad A_2=  \frac{\ell_{\rm{cat}_2}}{k_{\rm{cat}_2}},
\end{equation}
and assume that $S_{1,tot}, S_{2,tot}, E_{tot}, F_{tot}>0$.
Consider the following rational functions $\alpha_1, \alpha_2, \alpha_3, \alpha_4$ 
depending on the  catalytic reaction rate constants and total concentration constants:
\begin{equation*}
\begin{aligned}
\alpha_1=&\dfrac{S_{1,tot}}{F_{tot}}- A_2,\\
\alpha_2=&(A_1+1) - \dfrac{S_{1,tot}}{F_{tot}},\\
\alpha_3=&\dfrac{A_1 +1-A_2}{A_1}\, \dfrac{E_{tot}}{F_{tot}} - 
 \left(\dfrac{S_{1,tot}}{F_{tot}} - A_2\right),\\
\alpha_4=&
\dfrac{A_1+1-A_2}{A_2+1}\, \dfrac{S_{2,tot}}{F_{tot}} - 
 \left(A_1+1-\dfrac{S_{1,tot}}{F_{tot}}\right).
\end{aligned}
\end{equation*}

We then have:

\begin{thm}\label{th:cascade2}
	Consider the enzymatic cascade with two layers with digraph as in \eqref{eq:cascade2} 
	and let $A_1, A_2$ as in~\eqref{eq:A12}.
	Assume that the reaction rate constants verify $A_1+1>A_2$ and the total 
	concentration constants verify the  inequalities
	$ \alpha_1, \alpha_2, \alpha_3, 
	\alpha_4>0,$
	that is:
	
	$$A_1+1 > \dfrac{S_{1,tot}}{F_{tot}} > A_2, \quad 
	\dfrac{E_{tot}}{F_{tot}} > \left(\dfrac{S_{1,tot}}{F_{tot}}- A_2\right) \, \dfrac{A_1}{A_1+1-A_2},$$ 
$$\dfrac{S_{2,tot}}{F_{tot}} > \left(A_1+1- \dfrac{S_{1,tot}}{F_{tot}}\right) \, \dfrac{A_2+1}{A_1+1-A_2},$$
or instead, that $A_1+1< A_2$ and $ \alpha_1, \alpha_2, \alpha_3, \alpha_4<0.$

	Fix generic positive numbers $h_2$, $h_3$, $h_7$, $h_8$ such that $h_8 < h_2$.
	Then, there exists $t_0 >0$ such that for any value of $t \in (0, t_0)$ the system has at least two positive steady states after 
	modifying the coefficients $k_{\rm{on}_1}, k_{\rm{on}_2}, \ell_{\rm{on}_1}, \ell_{\rm{on}_2}$ via   the rescaling 
	$t^{-h_7}k_{\rm{on}_1}$, $t^{-h_3-h_8}k_{\rm{on}_2}$, $t^{-h_2-h_3}\ell_{\rm{on}_1}$ and 
	$t^{-h_2}\ell_{\rm{on}_2}$. 
	
	Also, for any fixed choice of reaction rate constants and total concentration constants
	 lying in the open set defined by one of the previous set of inequalities, 
	there exist positive constants $M_1, \dots, M_6$ such that for any values of $\beta_1, \beta_2, \eta_1, \eta_2$ verifying 
\begin{equation}\label{inequalitiesbeta}
\frac{1}{\eta_2} < M_1,\quad \frac{\eta_2}{\eta_1}<M_2,\quad \frac{1}{\beta_1}<M_3, 
\quad \frac{\eta_1}{\eta_2\beta_2}<M_4,\quad \frac{\beta_2}{\eta_1}<M_5,\quad \frac{1}{\beta_2}<M_6,
\end{equation}
the rescaling of the given parameters $k_{\rm{on}_0}$, $k_{\rm{on}_1}$, $\ell_{\rm{on}_0}$ 
and $\ell_{\rm{on}_1}$ by $\beta_1k_{\rm{on}_1}$, $\beta_2k_{\rm{on}_2}$, $\eta_1\ell_{\rm{on}_1}$ 
and $\eta_2\ell_{\rm{on}_2}$ respectively, gives raise to a multistationary system.
\end{thm}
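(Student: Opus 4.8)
The plan is to turn the positive steady states of~\eqref{eq:cascade2} into the positive solutions of a sparse system of the shape~\eqref{systemf} and then invoke Theorems~\ref{th:BS} and~\ref{thm:main2}. First I would use the monomial parametrization~\eqref{parametrizationcascade2}: substituting the expressions for $s^0_1,s^0_2,y^0_1,y^1_1,y^0_2,y^1_2$ into the four conservation laws~\eqref{eq:conscascada2} produces a system $f_1=\dots=f_4=0$ of four Laurent polynomials in the four variables $(e,f,s^1_1,s^1_2)$, whose positive solutions are in bijection with the positive steady states in the stoichiometric class fixed by $(E_{tot},F_{tot},S_{1,tot},S_{2,tot})$. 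Reading off exponents gives a support $\mathcal A\subset\Z^4$ with $n=9$ points, the exponent vectors of $1,e,f,s^1_1,s^1_2,s^1_1f,s^1_2f,e^{-1}s^1_1f,(s^1_1)^{-1}s^1_2f$ (note $\conv(\mathcal A)$ is $4$-dimensional, since $1,e,f,s^1_1,s^1_2$ already form a $4$-simplex), and a coefficient matrix $C\in\R^{4\times9}$ with explicit entries in the total constants and in $K_i,L_i,G_i,A_i$; in particular $K_iG_i=A_iL_i$ and $G_i=A_iL_i/K_i$, so that rescaling the four on-rates multiplies the columns of $C$ attached to $s^1_1f,s^1_2f,e^{-1}s^1_1f,(s^1_1)^{-1}s^1_2f$ by $\eta_1,\eta_2,\eta_1/\beta_1,\eta_2/\beta_2$ respectively and leaves the other five columns fixed.

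The core of the argument is to exhibit two $4$-simplices $\Delta_1,\Delta_2$ with vertices in $\mathcal A$ that share a facet, meet only along it, and are positively decorated by $C$ in the sense of Definition~\ref{def:dec}. I would choose the two simplices so that, after dividing by manifestly positive common factors, each alternating minor $(-1)^i\minor(\cdot,i)$ of the two $4\times5$ submatrices collapses to a positive multiple of one of $\alpha_1,\alpha_2,\alpha_3,\alpha_4$ or of $A_1+1-A_2$. Positive decoration of both submatrices then amounts exactly to requiring that $A_1+1>A_2$ together with $\alpha_1,\alpha_2,\alpha_3,\alpha_4>0$, or, after flipping every sign, that $A_1+1<A_2$ together with $\alpha_1,\alpha_2,\alpha_3,\alpha_4<0$. \emph{This verification is the step I expect to be the main obstacle}: the entries of $C$ depend nonlinearly on the kinetic and total parameters, so the simplices must be selected with care to make the decoration minors factor transparently, and one must confirm that the sign of each minor is governed by a single $\alpha_i$ with no extraneous constraints.

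For the first assertion I would compute the cone $\mathcal C_{\Delta_1,\Delta_2}=\mathcal C_{\Delta_1}\cap\mathcal C_{\Delta_2}$ and write it as $\{h:\langle m_r,h\rangle>0,\ r=1,\dots,6\}$. Because multistationarity is to be produced by tuning only $k_{\rm{on}_1},k_{\rm{on}_2},\ell_{\rm{on}_1},\ell_{\rm{on}_2}$, the attainable height vectors form a four-parameter family; parametrizing it by $h_2,h_3,h_7,h_8$ one checks that on this family the six forms $\langle m_r,h\rangle$ become $h_2,h_3,h_7,h_8,h_2-h_8,h_3+h_8$, so that genericity together with the single relation $h_8<h_2$ places $h$ in the interior of $\mathcal C_{\Delta_1,\Delta_2}$. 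Theorem~\ref{th:BS} then yields $t_0>0$ such that for $0<t<t_0$ the deformed system $\sum_j c_{ij}t^{h_j}x^{a_j}=0$ has at least two positive solutions, and matching each factor $t^{h_j}$ with the rate occurring in the corresponding column of $C$ turns the deformation into the rescalings $t^{-h_7}k_{\rm{on}_1}$, $t^{-h_3-h_8}k_{\rm{on}_2}$, $t^{-h_2-h_3}\ell_{\rm{on}_1}$, $t^{-h_2}\ell_{\rm{on}_2}$; pulling the two solutions back through~\eqref{parametrizationcascade2} gives two positive steady states.

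For the second assertion I would apply Theorem~\ref{thm:main2} to the same pair $\Delta_1,\Delta_2$, obtaining constants $M_1,\dots,M_6>0$ and the open set $U=\{\gamma\in\R^9_{>0}:\gamma^{m_r}<M_r\}$. Restricting $\gamma$ to the four-dimensional subgroup cut out by the on-rate rescalings, where $\gamma$ equals $(1,1,1,1,1,\eta_1,\eta_2,\eta_1/\beta_1,\eta_2/\beta_2)$, expresses the six monomials $\gamma^{m_r}$ as $1/\eta_2,\ \eta_2/\eta_1,\ 1/\beta_1,\ \eta_1/(\eta_2\beta_2),\ \beta_2/\eta_1,\ 1/\beta_2$, so that $U$ restricts to exactly the region~\eqref{inequalitiesbeta}; every $(\beta_1,\beta_2,\eta_1,\eta_2)$ there rescales the on-rates to a multistationary system. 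The two assertions are moreover coherent: substituting the $t$-rescalings of the first part into these six monomials gives $t^{h_2},t^{h_3},t^{h_7},t^{h_8},t^{h_2-h_8},t^{h_3+h_8}$, all tending to $0$ precisely when $h_2,h_3,h_7,h_8>0$ and $h_8<h_2$, which is why the fifth inequality of~\eqref{inequalitiesbeta} is the one reflecting the constraint $h_8<h_2$. The remaining work is bookkeeping: once the decoration computation fixes the vectors $m_r$, one reads off that the six monomials are indeed those of~\eqref{inequalitiesbeta}.
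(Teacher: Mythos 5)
Your strategy coincides with the paper's: substitute the monomial parametrization~\eqref{parametrizationcascade2} into the conservation laws~\eqref{eq:conscascada2}, exhibit two positively decorated $4$-simplices sharing a facet, and invoke Theorems~\ref{th:BS} and~\ref{thm:main2}. Your bookkeeping is also correct and matches the paper's: the on-rate rescalings do multiply the columns of $C$ attached to $s^1_1f,\,s^1_2f,\,e^{-1}s^1_1f,\,(s^1_1)^{-1}s^1_2f$ by $\eta_1,\,\eta_2,\,\eta_1/\beta_1,\,\eta_2/\beta_2$; the six forms $\langle m_r,h\rangle$ do restrict to $h_2,h_3,h_7,h_8,h_2-h_8,h_3+h_8$ on the attainable heights; and the six monomials $\gamma^{m_r}$ do restrict to exactly the left-hand sides of~\eqref{inequalitiesbeta}.

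The one genuine gap is the step you yourself flag: you never exhibit the simplices, and everything downstream --- the claim that the decoration minors are governed exactly by $A_1+1-A_2$ and $\alpha_1,\dots,\alpha_4$ with no extraneous constraints, the six vectors $m_r$, and hence the specific restricted forms and monomials you quote --- depends on which pair is chosen. The paper's choice, in the variable order $(e,f,s^1_1,s^1_2)$, is
\begin{align*}
\Delta_1&=\{(1,0,0,0),(0,0,0,1),(0,1,1,0),(0,1,0,1),(0,0,0,0)\},\\
\Delta_2&=\{(1,0,0,0),(0,1,1,0),(0,1,0,1),(0,1,-1,1),(0,0,0,0)\},
\end{align*}
i.e.\ the exponent vectors of $\{e,\,s^1_2,\,s^1_1f,\,s^1_2f,\,1\}$ and $\{e,\,s^1_1f,\,s^1_2f,\,(s^1_1)^{-1}s^1_2f,\,1\}$, which share the facet $\{e,\,s^1_1f,\,s^1_2f,\,1\}$ and meet only there. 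With this choice the two $4\times 5$ submatrices of~\eqref{matrixcascade2} are positively spanning precisely under the stated sign conditions, the cone $\mathcal{C}_{\Delta_1,\Delta_2}$ is cut out by the six vectors $m_1,\dots,m_6$ whose restrictions you computed, and the change of variables $\bar f=t^{h_2}f$, $\bar s^1_1=t^{h_3}s^1_1$ converts the $t$-deformation into the stated rescaling of $K_1,K_2,L_1,L_2$ (hence of the on-rates). So once the simplices are named, every assertion in your outline checks out and the remainder is, as you say, bookkeeping.
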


\begin{proof}
		We substitute the monomial parametrization in (\ref{parametrizationcascade2})  
	 of the steady states in terms of the concentrations of the species
	 $e, f, s^1_1, s^1_2$    into the linear
	 conservation relations \eqref{eq:conscascada2}. We write this system in matricial form:
	\[C\begin{pmatrix}
	e & f & s^1_1 & s^1_2 & s^1_1f & s^1_2f & s^1_1fe^{-1} & s^1_2f(s^1_1)^{-1} & 1
	\end{pmatrix}^t=0,
	\]
	where the matrix $C\in \R^{4\times 9}$  of coefficients equals:
	\begin{equation}\label{matrixcascade2}
	C=\begin{pmatrix}
	1 & 0 & 0 & 0 & K_1G_1 & 0 & 0 & 0 & -E_{tot}\\
	0 & 1 & 0 & 0 & L_1 & L_2 & 0 & 0 & -F_{tot}\\
	0 & 0 & 1 & 0 & K_1G_1 + L_1 & K_2G_2 & G_1 & 0 & -S_{1,tot} \\
	0 & 0 & 0 & 1 & 0 & K_2G_2 + L_2 & 0 & G_2 & -S_{2,tot}
	\end{pmatrix}.
	\end{equation}
	
	If we order the variables as before, the support of this system is:
	\begin{align*}
	\mathcal{A}=\{(1,0,0,0), (0,1,0,0), (0,0,1,0), (0,0,0,1),  (0,1,1,0),  \\ (0,1,0,1), (-1,1,1,0), (0,1,-1,1), (0,0,0,0) \}.
	\end{align*}
	
	We want to find two positively decorated $4$-simplices with vertices in $\mathcal{A}$ which share one facet. For example 
we take the 
	simplices 
	\begin{align*}
	\Delta_1&=\{(1,0,0,0),(0,0,0,1), (0,1,1,0), (0,1,0,1), (0,0,0,0) \},\\
	\Delta_2&=\{(1,0,0,0), (0,1,1,0), (0,1,0,1), (0,1,-1,1), (0,0,0,0) \}.
	\end{align*}
	
	It is straightforward to check that both simplices are positively decorated by $C$ 
	if either $A_1 + 1 > A_2$ 
	and $\alpha_1, \alpha_2, \alpha_3, \alpha_4>0$, or 
	$A_1 + 1 < A_2$ and
	$\alpha_1, \alpha_2, \alpha_3, \alpha_4<0$, as in the statement.

Given $h\in\mathcal{C}_{\Delta_1,\Delta_2}$, by Theorem~\ref{th:BS}, there exists $t_0 \in \R_+$ such that for all $0<t<t_0$, 
	the number of positive (nondegenerate) solutions of the scaled system:
	\vskip -10pt
	\begin{small}
		\begin{equation}\label{systemwitht-cascade2}
		\begin{aligned}
		t^{h_1} \, e + t^{h_5} \, K_1G_1s^1_1f -  t^{h_9} \, E_{tot}  &=& 0, \\
		t^{h_2}\, f + t^{h_5} \, L_1 f s^1_1 + t^{h_6}  \, L_2 f s^1_2 -  t^{h_9} \, F_{tot} &=& 0,\\
		t^{h_3} \, s^1_1 + t^{h_7} \, G_1 \dfrac{s^1_1 f}{e} + t^{h_5} \, 
		 (K_1G_1 + L_1) s^1_1 f  + t^{h_6} \, K_2 G_2 s^1_2 f - t^{h_9} 
		\, S_{1,tot} &=& 0, \\
		t^{h_4} \, s^1_2 + t^{h_8} \, G_2 \dfrac{s^1_2f}{s^1_1} + t^{h_6} \, (K_2 G_2 + L_2)s^1_2f - t^{h_9} \, S_{2,tot}&=& 0, \\
		\end{aligned}
		\end{equation} \end{small}
	
	\noindent is at least two. 
	If we think of the vector $h$ as a function $\mathcal A \rightarrow \R$ (defined by $h(a_j)=h_j$),
	then
	$h_1=h(1,0,0,0)$, $h_2=h(0,1,0,0)$, $h_3=h(0,0,1,0)$, $h_4=h(0,0,0,1)$,  $h_5=h(0,1,1,0)$, 
	$h_6=h(0,1,0,1)$, $h_7=h(-1,1,1,0)$, $h_8=h(0,1,-1,1)$
	and $h_9=h(0,0,0,0)$.
	Let $\varphi_1$ and $\varphi_2$ be the affine linear 
	functions which agree with $h$ on the simplices $\Delta_1$ and 
	$\Delta_2$ respectively.
	We can take $h_1=h_4=h_5=h_6=h_9=0$.
		Then $\varphi_1=0$, $h_8>0$ and $\varphi_2$ is defined by
	$\varphi_2(x,y,z,w)=h_8y - h_8z - h_8w$.
	Moreover,
	
	\vskip -10pt
	\[\begin{matrix}
	0 < h_2,&  \varphi_2(0,1,0,0)  & = & h_8 & < & h_2,\\
	0 < h_3,&  \varphi_2(0,0,1,0)  & = & -h_8 & < & h_3,\\
	0 < h_7,&  \varphi_2(-1,1,1,0)  & = & 0 & < & h_7, 
	\end{matrix}\]
	where we could take $h_2, h_3$ and $h_7$ generic.
	
	If we change the variables $\bar f = t^{h_2}f$, $\bar s^1_1 = t^{h_3}s^1_1$, 
	we get the following (Laurent) polynomial equations:
	\begin{equation}\label{systemwitht-cascade2-c}
	\begin{aligned}
	e + t^{-h_2-h_3}\, K_1G_1\, \bar s^1_1\bar f -  E_{tot}  &=& 0, \\
	\bar f + t^{-h_2-h_3}\, L_1\, \bar f\bar s^1_1 + t^{-h_2\, }L_2\, \bar f s^1_2 - F_{tot} &=& 0,\\
	\bar s^1_1 + t^{h_7-h_2-h_3}\, G_1\, \dfrac{\bar s^1_1\bar f}{e} +
	 t^{-h_2-h_3}\, (K_1G_1 + L_1)\, \bar s^1_1\bar f  + t^{-h_2}\, K_2 G_2 \, 
	s^1_2\bar f - S_{1,tot} &=& 0, \\
	s^1_2 + t^{h_8+h_3-h_2}\, G_2\dfrac{s^1_2\bar f}{\bar s^1_1} +
	 t^{-h_2}\, (K_2 G_2 + L_2)\,  s^1_2\bar f - S_{2,tot}&=& 0.
	\end{aligned}
	\end{equation}
	It is straightforward to verify that if we scale the constants:
	\begin{equation}\label{eq:const}
t^{-h_7}K_1,\ 
	t^{-h_3-h_8}K_2,\
	t^{-h_2-h_3}L_1,\
	t^{-h_2}L_2,
	\end{equation}
	and we keep fixed the values of  $k_{\rm{cat}_1}$, $k_{\rm{cat}_2}$, 
	${\ell_{\rm{cat}_1}}$ and ${\ell_{\rm{cat}_2}}$ and the total 
	values $E_{tot}$, $F_{tot}$, $S_{1,tot}$ and $S_{2,tot}$,
	 the intersection of the steady state variety and the linear varieties 
	of fixed total concentrations of the dynamical system associated 
	with the corresponding network, is described by system 
	(\ref{systemwitht-cascade2-c}). 
	
	It is easy to
	check that to get the scaling in~\eqref{eq:const}, it is enough to rescale the original constants as follows:
	$t^{-h_7}k_{\rm{on}_1}$, $t^{-h_3-h_8}k_{\rm{on}_2}$,
	 $t^{-h_2-h_3}\ell_{\rm{on}_1}$ and $t^{-h_2}\ell_{\rm{on}_2}$. 
	Then, for these choices of constants the system has at least two positive steady states. 
	The last part of the statement follows from the previous rescaling or from the inequalities 
	that define the cone $\mathcal{C}_{\Delta_1,\Delta_2}$ of heights inducing regular subdivisions 
	of the convex hull of $\mathcal{A}$ that contain $\Delta_1$ and $\Delta_2$ and Theorem~\ref{thm:main2}. 
	For instance, we can check that $\mathcal{C}_{\Delta_1,\Delta_2}$ is defined by $6$ inequalities. We can write:
\begin{align*}
\mathcal{C}_{\Delta_1,\Delta_2}=\{h=(h_1,\dots,h_8)\in\R^8 \, : \,  \langle m_r, h \rangle  > 0 , \; r=1,\ldots,6\},
\end{align*}
where $\langle ,  \rangle$ denotes the canonical inner product of $\R^8$ and $m_1=(0,1,0,1,0,-1,0,0,-1)$, 
$m_2=(0,0,1,-1,-1,1,0,0,0),  m_3=(1,0,0,0,-1,0,1,0,-1), m_4=(0,0,0,1,1,-2,0,1,-1)$, $m_5=(0,1,0,0,-1,1,0,-1,0),
 m_6=(0,0,1,0,0,-1,0,1,-1)$. By Theorem~\ref{thm:main2}, there exist $M_1,\dots,M_6>0$ such that for any 
 $\gamma=(\gamma_1,\dots,\gamma_9)$ in the open set \[ U \, =
  \{ \gamma \in \R_{>0}^9 \, : \, \gamma^{m_r} < M_r, \, r=1 \dots,6\},\] the system 
	\vskip -10pt
	\begin{small}
		\begin{equation}\label{systemwithgamma-cascade2}
		\begin{aligned}
		\gamma_1 \, e + \gamma_5 \, K_1G_1\, s^1_1f -  \gamma_9 \, E_{tot}  &=& 0, \\
		\gamma_2\, f + \gamma_5\, L_1\,  f s^1_1 + \gamma_6  \, L_2\,  f s^1_2 -  \gamma_9 \, F_{tot} &=& 0,\\
		\gamma_3 \, s^1_1 + \gamma_7 \, G_1 \dfrac{s^1_1 f}{e} +
		 \gamma_5 \,  (K_1G_1 + L_1) \, s^1_1 f  + \gamma_6 \, K_2 G_2 \, s^1_2 f - \gamma_9 
		\, S_{1,tot} &=& 0, \\
		\gamma_4 \, s^1_2 + \gamma_8 \, G_2 \dfrac{s^1_2f}{s^1_1} +
		 \gamma_6 \, (K_2 G_2 + L_2)\, s^1_2f - \gamma_9 \, S_{2,tot}&=& 0, \\
		\end{aligned}
		\end{equation} \end{small}
	
	\noindent has at least two positive solutions. If we take $\gamma_1=\gamma_2=\gamma_3=
	\gamma_4=\gamma_9=1$, and we denote $\beta_1=\frac{\gamma_5}{\gamma_7}$, $\beta_2=
	\frac{\gamma_6}{\gamma_8}$, $\eta_1=\gamma_5$ and $\eta_2=\gamma_6$, the conditions such that  
	$\gamma$ belongs to $U$ are equivalent to the conditions \eqref{inequalitiesbeta}, and it is easy to check that the 
	steady state equations of the network after the rescaling of the given parameters $k_{\rm{on}_0}$, 
	$k_{\rm{on}_1}$, $\ell_{\rm{on}_0}$ and $\ell_{\rm{on}_1}$ by $\beta_1\, k_{\rm{on}_1}$, 
	$\beta_2\, k_{\rm{on}_2}$, $\eta_1\, \ell_{\rm{on}_1}$ and $\eta_2\, 
	\ell_{\rm{on}_2}$ give system \eqref{systemwithgamma-cascade2}.
	
\end{proof}


\begin{ej} \rm 	Note that the inequalities in the statement of Theorem~\ref{th:cascade2} are clearly compatible. 
For example, the inequalities are satisfied if we take in the first case 
	$\frac{\ell_{\rm{cat}_1}}{k_{\rm{cat}_1}}=1$, $\frac{\ell_{\rm{cat}_2}}{k_{\rm{cat}_2}}=1$, $E_{tot}=F_{tot}=20$, 
	$S_{1,tot}=S_{2,tot}=30$. We can obtain in this case
a value of $t$ such that the system \eqref{systemwitht-cascade2-c} has two or more positive solutions, 
using  Singular. 
Fix for example, $h_2=2$, $h_3=1$, $h_7=1$, $h_8=1$, $K_1=1$, $K_2=1$, $L_1=1$ and $L_2=1$.
We have then that $G_1=1$ and $G_2=1$. If we take 
$t=\frac{1}{24}$, we have that the system has 3 positive solutions:
\begin{verbatim}
>LIB "signcond.lib";
>ring r=(0,t), (x,y,z,w), dp;
>poly f1=x+t^3*y*z-20;
>poly f2=y+t^3*y*z+t^2*y*w-20;
>poly f3=x*z+t^2*z*y+t^3*2*y*z*x+t^2*y*w*x-30*x;
>poly f4=z*w+y*w+t^2*y*z*w-30*z;
>poly g1=subst(f1,t,24);
>poly g2=subst(f2,t,24);
>poly g3=subst(f3,t,24);
>poly g4=subst(f4,t,24);
>ideal i=g1,g2,g3,g4;
>ideal j=std(i);
>firstoct(j);
3
\end{verbatim}
Here $x=e$, $y=\bar{f}$, $z={\bar s^1_1}$ and $w=s^1_2$. 
It can be checked that if we take a slighly higher value $t=\frac{1}{23}$, 
the corresponding system has only one positive solution.
\end{ej}

 \section{Enzymatic cascades with $n$ layers}\label{sec:3}

We now present our results to the general case of an enzymatic cascade of $n$ layers, where we have 
 $n$ phosphorylation cycles (as in Figure \ref{fig:cascaden}), under
 the assumption that there are (at least) two layers which share a phosphatase. 
  We separate our study into two cases: 
 the case of the occurrence of the same phosphatase in two consecutive layers  (see Theorem~\ref{th:provisorio-cascaden})
and the case  where the layers which share the phosphatase are not consecutive
 (see Theorem~\ref{th:cascadan-noconsecutivos}).
  As we pointed out in the Introduction, the difficulty to deal with these networks is that the simplified 
  polynomials that we get to describe the steady states in a given stoichiometric
compatibility class depend on a number of variables that grows linearly
with $n$ and the corresponding coefficient matrix does not have generic entries.
 We are nevertheless able to detect two simplices in these 
high dimensional spaces which share a facet, which are positively decorated
by the (huge) coefficient matrix.
 
 We first set the notation. 
 
 \subsection{Our setting}
 Using the notation in Figure~\ref{fig:cascaden}, we call $S_i^0, S_i^1$  the substrate proteins
  in the $i$-th layer, for $i=1,\dots,n$. As before, the upper index can be interpreted as the absence
   ($0$) or the presence ($1$) of a phosphate group in the substrate. The phosphorylation in the first layer is 
   catalyzed by the enzyme $S^1_0$. The activated protein $S^1_i$ in the $i$-th layer acts as the modifier 
   enzyme in the $(i+1)$-th layer. 
   The dephosphorylation in the $i$-th layer is carried out by a phosphatase $F_i$. 
   Some of the $F_i$ can be the same species, that is, the same phosphatase can react at different layers.
 
 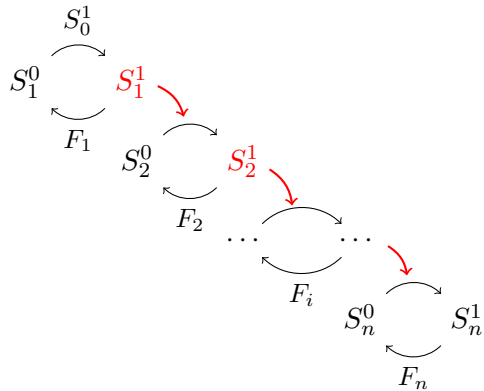
\begin{figure}[h]
 \centering
 \begin{tikzpicture}[scale=0.7,node distance=0.7cm] 
  \node[] at (5.3,-4.5) (dummy4) {};
  \node[left=of dummy4] (n0) {$S^0_n$};
  \node[right=of n0] (n1){$S^1_n$}
    edge[->, bend left=45] node[below] {\textcolor{black}{\small{$F_n$}}} (n0)
    edge[<-, bend right=45] node[above] {} (n0);
  \node[] at (3.2,-3) (dummy3) {};
  \node[left=of dummy3] (r0) {$\cdots$};
  \node[right=of r0] (r1){$\cdots$}
    edge[->, bend left=45] node[below] {\textcolor{black}{\small{$F_i$}}} (r0)
    edge[<-, bend right=45] node[above] {} (r0)
    edge[->,thick,color=red, bend left=25] node[above] {} ($(n0.north)+(25pt,10pt)$);
  \node[] at (1.1,-1.5) (dummy2) {};
  \node[left=of dummy2] (p0) {$S^0_2$};
  \node[right=of p0] (p1){\textcolor{red}{$S^1_2$}}
    edge[->, bend left=45] node[below] {\textcolor{black}{\small{$F_2$}}} (p0)
    edge[<-, bend right=45] node[above] {} (p0)
      edge[->,thick,color=red, bend left=25] node[above] {} ($(r0.north)+(25pt,10pt)$);
  \node[] at (-1,0) (dummy) {};
  \node[left=of dummy] (s0) {$S^0_1$};
  \node[right=of s0] (s1) {\textcolor{red}{$S^1_1$}}
    edge[->, bend left=45] node[below] {\textcolor{black}{\small{$F_1$}}} (s0)
    edge[<-, bend right=45] node[above] {\textcolor{black}{\small{$S^1_0$}}} (s0)
      edge[->,thick,color=red, bend left=25] node[above] {} ($(p0.north)+(25pt,10pt)$);

 \end{tikzpicture}
  \caption{Enzymatic cascade with $n$ layers.}
 \label{fig:cascaden}
 \end{figure}
 
  We assume the following reaction scheme:
 \[ \begin{array}{c} \label{eq:cascaden}
 S^0_i + S^1_{i-1} \arrowschem{k_{\rm{on}_{i}}}{k_{\rm{off}_{i}}} Y^0_i 
 \stackrel{k_{\rm{cat}_{i}}}{\rightarrow} S^1_i + S^1_{i-1}, \qquad i=1\dots,n,\\
 
%
 S^1_i + F_i \arrowschem{\ell_{\rm{on}_{i}}}{\ell_{\rm{off}_{i}}} Y^1_i
 \stackrel{\ell_{\rm{cat}_{i}}}{\rightarrow} S^0_i+ F_i, \qquad i=1,\dots,n.
 
 \end{array}
 \]
 
We denote by $\mathcal{F}=\{P_1,\dots,P_r\}$ the set of phosphatases that appear in the network.
 In this case we have $4n+r+1$ chemical species: $S^1_0$, $S^0_1$, $S^1_1$, $S^0_2$, $S^1_2$,
 $\dots$,$S^0_n$, $S^1_n,$, $P_1$, $P_2 \dots P_r$, $Y^0_1$, $Y^1_1$, $Y^0_2$, $Y^1_2$,
 $\dots$, $Y^0_n$, $Y^1_n$. We denote the concentration of the species with small letters.
%
%
%
For each $j=1,\dots,r$, we call $\Lambda_j=\{i\in\{1,\dots,n\} \, : \,   F_i=P_j\}$
and we consider the function $j\colon\{1,\dots,n\}\to\{1,\dots,r\}$, defined by $j(i)= j$ if  $F_i=P_j$.

 The associated dynamical system that arises under mass-action kinetics is equal to:
 \vskip -10pt
 \begin{small}
 \begin{align*}
 \frac{ds^0_i}{dt} &=  {-k_{\rm{on}_i}}s^0_is^1_{i-1} + {k_{\rm{off}_i}}y^0_i + {\ell_{\rm{cat}_i}} y^1_i, \quad i=1,\dots,n,\\
  \frac{ds^1_i}{dt} &=  {k_{\rm{cat}_i}}y^0_i - {\ell_{\rm{on}_{i}}}s^1_ip_{j(i)} + 
  {\ell_{\rm{off}_{1}}}y^1_i  -{k_{\rm{on}_{i+1}}}s^0_{i+1}s^1_i + 
 ({k_{\rm{off}_{i+1}}} + {k_{\rm{cat}_{i+1}}})y^0_{i+1}, \quad i=1,\dots,n-1,\\
 \frac{ds^1_n}{dt} &=  {k_{\rm{cat}_n}}y^0_n - {\ell_{\rm{on}_{n}}}s^1_np_{j(n)} +
  {\ell_{\rm{off}_{n}}}y^1_n,\\
 \frac{dy^0_i}{dt} &= {k_{\rm{on}_i}}s^0_is^1_{i-1}  -{(k_{\rm{off}_i}+k_{\rm{cat}_i})}y^0_i, \quad i=1,\dots,n,\\
 \frac{dy^1_i}{dt} &=  {\ell_{\rm{on}_i}}s^1_i p_{j(i)} -{(\ell_{\rm{off}_i}+\ell_{\rm{cat}_i})}y^1_i, \quad i=1,\dots,n,\\
 \frac{ds^1_0}{dt}&=-\frac{dy^0_1}{dt}, 
 \qquad \frac{dp_j}{dt}=-\sum_{i\in \Lambda_j} \frac{dy^1_i}{dt}, \quad j=1,\dots,r.
 \end{align*}
 \end{small}
 
 The space of linear forms yielding conservation laws has dimension $n+r+1$, 
 and we consider the following $n+r+1$ linearly independent conservation relations:
 \begin{align}\label{conscascadan}
 s^1_0 + y^0_1 = & S_{0,tot},\nonumber \\
 s^0_i + s^1_i + y^0_i + y^1_i + y^0_{i+1} = & S_{i,tot},\quad i=1,\dots,n-1, \\
 s^0_n + s^1_n + y^0_n + y^1_n = & S_{n,tot},\nonumber\\
 p_j + \sum_{i\in \Lambda_j}  y^1_i = & P_{j,tot},\quad j=1,\dots,r.\nonumber
 \end{align}

 Again, following the general procedure described in \cite{aliciaMer}, we can find binomial equations
  that describe the concentration of the species at steady state. 
  The concentration of the intermediate species satisfy these binomial equations:
 \begin{align*}
 &y^0_i-K_i\,s^1_{i-1}s^0_i =0, \ \ i=1\dots,n,\qquad y^1_i-L_i\,p_{j(i)}s^1_i=0, \ \ i=1\dots,n,
 \end{align*}
 where $K_i=\frac{k_{\rm{on}_i}}{k_{\rm{off}_i}+k_{\rm{cat}_i}}$, $i=1,\dots,n$,  
 $L_i=\frac{\ell_{\rm{on}_i}}{\ell_{\rm{off}_i}+\ell_{\rm{cat}_i}}$, $i=1,\dots,n$. 
 The remaining binomials can be (algorithmically) chosen to be:
 \[ \tau_i\,s^0_is^1_{i-1} - \nu_i\,s^1_ip_{j(i)}=0,\ i=1,\dots,n,\]
 where $\tau_i=k_{\rm{cat}_i}K_i$, $\nu_i=\ell_{\rm{cat}_i}L_i$, $i=1,\dots,n$.
 
As in the previous case of two layers, we can parametrize  the positive steady states by monomials. 
For instance, we can write  the concentrations of all species in terms of $s^1_i$,
 for $i=0, 1,\dots,n$ and $p_1, \dots,p_r$:
 \[\begin{array}{lrl}
 \label{parametrizationcascadefcualquiera}
s^0_i \, = & G_i \frac{s^1_ip_{j(i)}}{s^1_{i-1}},& i=1,\dots,n, \\
y^0_i \, =& K_iG_i\, s^1_ip_{j(i)}, & i=1,\dots,n,\nonumber\\
y^1_i  \, =& L_i\, s^1_ip_{j(i)}, & i=1,\dots,n,\nonumber
\end{array}\]
 where $G_i=\frac{\nu_i}{\tau_i}$ for all $i=1,\dots,n$.
 
 As in~\eqref{eq:A12}, we will denote for any $j=1, \dots, n$:
  \begin{equation}\label{eq:Aj}
A_j =\frac{\ell_{\rm{cat}_j}}{k_{\rm{cat}_j}}.
\end{equation}

\subsection{Statement of our main results}

Suppose first that there are two consecutive layers $i_0$, $i_0+1$,  $1\leq i_0\leq n-1$, with the same 
phosphatase $F$, that is, $P_{j(i_0)}=P_{j(i_0+1)}$, and with no restriction in the other layers.
Let $\alpha_{1,i_0},\alpha_{2,i_0},\alpha_{3,i_0}$ and $\alpha_{4,i_0}$ be as in the case $n=2$, but 
 these constants correspond to the restriction to the two layers $i_0$ and $i_0+1$. That is:
\begin{equation*}
\begin{aligned}
\alpha_{1,i_0}=&\dfrac{S_{i_0,tot}}{F_{tot}}- A_{i_0+1},\\
\alpha_{2,i_0}=&(A_{i_0}+1) - \dfrac{S_{i_0,tot}}{F_{tot}},\\
\alpha_{3,i_0}=&\dfrac{A_{i_0} +1-A_{i_0+1}}{A_{i_0}}\, \dfrac{S_{{i_0}-1,tot}}{F_{tot}} - 
 \left(\dfrac{S_{{i_0},tot}}{F_{tot}} - A_{{i_0}+1}\right),\\
\alpha_{4,{i_0}}=&
\dfrac{A_{i_0}+1-A_{{i_0}+1}}{A_{{i_0}+1}+1}\, \dfrac{S_{{i_0}+1,tot}}{F_{tot}} - 
 \left(A_{i_0}+1-\dfrac{S_{{i_0},tot}}{F_{tot}}\right),
\end{aligned}
\end{equation*}
where the value of $E_{tot}$ in the case $n=2$ now corresponds to the value 
$S_{{i_0}-1,tot}$ and $F_{tot}=P_{j({i_0}),tot}=P_{j({i_0}+1),tot}$.
 We have the following result:
 
 \begin{thm}\label{th:provisorio-cascaden} Suppose $n\geq 3$, and suppose that there are two 
 consecutive layers ${i_0}$, ${i_0}+1$, with $1\leq {i_0}\leq n-1$, with the same phosphatase  and with no 
 restriction in the other layers. Let $A_{i_0}, A_{{i_0}+1}$ be as in~\eqref{eq:Aj}. 
 	Assume that the reaction rate constants verify 
 	\begin{equation}\label{eq:ineq}
 	A_{i_0}+1>A_{{i_0}+1}
 	\end{equation}
 	and the total 
	concentration constants verify the  inequalities
	$ \alpha_{1,{i_0}}, \alpha_{2,{i_0}},  \alpha_{3,{i_0}}, 	\alpha_{4,{i_0}}>0,$
	that is:
		$$A_{i_0}+1 > \dfrac{S_{{i_0},tot}}{F_{tot}} > A_{{i_0}+1}, \quad 
	\dfrac{S_{{i_0}-1,tot}}{F_{tot}} > \left(\dfrac{S_{{i_0},tot}}{F_{tot}}- A_{{i_0}+1} \right) \, \dfrac{A_{i_0}}{A_{i_0}+1-A_{{i_0}+1}},$$ 
$$\dfrac{S_{{i_0}+1,tot}}{F_{tot}} > \left(A_{i_0}+1- \dfrac{S_{{i_0},tot}}{F_{tot}}\right) \, \dfrac{A_{{i_0}+1} +1}{A_{i_0}+1-A_{{i_0}+1} },$$
or $$A_{i_0} + 1 <
A_{{i_0}+1},\ \alpha_{1,{i_0}}, \alpha_{2,{i_0}},   \alpha_{3,{i_0}}, \alpha_{4,{i_0}}<0.$$

Then, there exists a rescaling in the constants $k_{on_i}, i=1,\dots,n$ and $\ell_{on_i}$, $i=1,\dots,n$, 
such that the system has at least two positive steady states.
  

 \end{thm}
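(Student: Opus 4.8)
The plan is to reduce the $n$-layer problem to the two-layer computation already carried out in the proof of Theorem~\ref{th:cascade2}, by isolating an ``active block'' of variables and exploiting the near-triangular structure of the coefficient matrix. First I would substitute the monomial parametrization of the positive steady states into the $n+r+1$ conservation relations \eqref{conscascadan}, obtaining a square Laurent system in the $d=n+r+1$ variables $s^1_0,\dots,s^1_n,p_1,\dots,p_r$, written as $C\cdot(\text{monomials})^t=0$ for a coefficient matrix $C$. The crucial structural observation is that in each relation the ``own'' pure concentration ($s^1_i$ for the layer-$i$ relation, $p_j$ for the phosphatase-$j$ relation) occurs with coefficient $1$ and occurs in no other relation, so the corresponding columns of $C$ are distinct unit vectors.

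Next I would choose the two $d$-simplices. Identifying $s^1_{i_0-1}$ with the enzyme $e$ (this is the top enzyme $s^1_0$ when $i_0=1$), the shared phosphatase $p_{j(i_0)}$ with $f$, and $s^1_{i_0},s^1_{i_0+1}$ with $s^1_1,s^1_2$, the five columns indexed by $s^1_{i_0-1}$, $s^1_{i_0+1}$, $s^1_{i_0}p_{j(i_0)}$, $s^1_{i_0+1}p_{j(i_0)}$ and $1$ span an active block $B$ which, restricted to the four relations for $S_{i_0-1,tot}$, $P_{j(i_0),tot}$, $S_{i_0,tot}$, $S_{i_0+1,tot}$, coincides column-for-column with the submatrices used in Theorem~\ref{th:cascade2}; the extra monomials created by the other layers do not meet these five vertex-columns. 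I would then set $\Delta_1,\Delta_2$ to be the two $n=2$ simplices inside this block, completed to full dimension by adjoining, for every remaining relation, its unit-vector pivot column $s^1_i$ or $p_j$. The two simplices differ in a single active vertex, exactly as in the two-layer case, so they share a facet; and adjoining unit vectors in the complementary coordinates preserves affine independence, so each is a genuine $d$-simplex.

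The heart of the argument is positive decoration, and here the block structure pays off. Ordering the active relations and columns first and the inactive relations and their pivots last, the $d\times(d+1)$ vertex-submatrix of $\Delta_1$ (and of $\Delta_2$) takes the form $\left(\begin{smallmatrix} B & 0 \\ c & I\end{smallmatrix}\right)$, where $I$ is the identity on the pivots and the coupling block $c$ is supported only on the constant column, since none of the nonconstant active vertex-monomials occurs in an inactive relation. A kernel vector $(v_B,v_I)$ then satisfies $Bv_B=0$ and $v_I=-c\,v_B$: the active coordinates span $\ker B$ and, by the two-layer computation, have all entries nonzero of one sign under the stated inequalities on the $A$'s and on $\alpha_{1,i_0},\dots,\alpha_{4,i_0}$; while each pivot coordinate equals $S_{i,tot}$ (resp. $P_{j,tot}$) times the positive constant-coordinate of $v_B$, hence is positive because all total concentrations are assumed positive. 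Thus the whole kernel vector has entries of a single sign and both simplices are positively decorated by $C$. This is precisely the kind of block extension abstracted in Theorem~\ref{thm:extending}.

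Finally I would invoke Theorem~\ref{th:BS}: pick $h$ in the nonempty cone $\mathcal{C}_{\Delta_1,\Delta_2}$, obtain $t_0>0$ so that for $0<t<t_0$ the $t$-scaled system has at least two nondegenerate positive solutions, and then, exactly as in the passage from \eqref{matrixcascade2} to \eqref{eq:const} in the two-layer proof, realize the column scalings $t^{h_j}$ by rescaling the variables $s^1_i,p_j$ and the on-rates $k_{\rm{on}_i},\ell_{\rm{on}_i}$ (which control $K_i,L_i$), yielding the claimed rescaling that produces two positive steady states. I expect the main obstacle to be bookkeeping rather than conceptual: one must verify carefully that the only coupling between the active block and the pivots is through the shared constant column, which is exactly what forces the pivot kernel-coordinates to inherit the correct sign from the positivity of the total constants, and that the residual scalings after the variable change can be absorbed into $k_{\rm{on}_i},\ell_{\rm{on}_i}$ alone, as in the explicit two-layer rescaling.
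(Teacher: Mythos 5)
Your proposal is correct and follows essentially the same route as the paper's proof: the same monomial parametrization substituted into the conservation laws, the same two simplices obtained by adjoining the unit-vector ``pivot'' columns of the inactive layers and phosphatases to the two-layer simplices, the same block-triangular argument (with coupling only through the constant column) reducing positive decoration of $C_{\Delta_j}$ to positive spanning of the $4\times 5$ block $C_j$ from Theorem~\ref{th:cascade2}, and the same concluding appeal to Theorem~\ref{th:BS} together with a rescaling of the on-rate constants. The paper additionally records the cone $\mathcal{C}_{\Delta_1,\Delta_2}$ and the resulting rescaling of $k_{on_i},\ell_{on_i}$ explicitly (via Theorem~5.4 of~\cite{AGB1}), but this is the same bookkeeping you indicate.
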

 
We will give an explicit rescaling in the proof.

\begin{remark} \rm 
In the statement of Theorem~\ref{th:provisorio-cascaden} 
we have conditions which are similar to those in the case $n=2$,
but depending  on the reaction rate 
constants corresponding to the layers ${i_0}$ and ${i_0}+1$ and total conservation constants.
Again, the two sets of inequalities in the statement of 
Theorem~\ref{th:provisorio-cascaden} are clearly compatible.

For $n\geq 3$, there is not only an increase in the number of variables but also in the number of conservation laws. 
 The idea of the proof of  Theorem~\ref{th:provisorio-cascaden} is to extend the simplices that appear in 
 the proof of Theorem~\ref{th:cascade2}  to simplices in the higher dimensional space, showing that 
 in fact the conditions of the new simplices to be positively decorated are basically the same.
\end{remark}

The other case is when the  layers which share a phosphatase are not consecutive. 
Assume $i_1 < i_2$ are two non-consecutive layers
sharing the same phosphatase. Assume also that
there are no two other consecutives
 layers with a common phosphatase between them (otherwise, we would be  in the hyphothesis of the previous case). That is,
 there exists $i_1, i_2$, with $1\leq i_1 < i_1+1 < i_2 \leq n$, such that $P_{j(i_1)}=P_{j(i_2)}=F$, 
 and $P_{j(i)}$ for $i=i_1 + 1,\dots, i_2 -1$ are all distinct and different from $F$. We impose no restrictions 
 on the phosphatases of the remaining layers layers $1,\dots,i_1 - 1, i_2+1,\dots,n$.

Consider the following rational functions $\beta_{1,i_1,i_2}$, $\beta_{2,i_1,i_2}$, $\beta_{3,i_1,i_2}$ 
and $\beta_{4,i_1,i_2}$ depending on the catalytic reaction rate constant and total concentration constants:
\begin{equation*}
\begin{aligned}
\beta_{1,i_1,i_2}=& \dfrac{S_{i_1-1,tot}}{S_{i_1,tot}} -
 \dfrac{A_{i_1}}{A_{i_1+1}},\\
\beta_{2,i_1,i_2}=&(A_{i_1}+1) - \dfrac{S_{{i_1},tot}}{ F_{tot}},\\
\beta_{3,i_1,i_2}=&\dfrac{S_{i_2-1,tot}}{F_{tot}} -\left( A_{i_1}+1\right)\,  \dfrac{S_{i_2,tot}}{F_{tot}},\\
\beta_{4,i_1,i_2}=&\dfrac{S_{{i_1},tot}}{F_{tot}} - 
\left(\dfrac{A_{i_1}+1}{A_{i_2}+1}\right) \, \left(A_{i_2}+1-
\dfrac{ S_{{i_2},tot}}{F_{tot}}\right),
\end{aligned}
\end{equation*}
where $F_{tot}=P_{j(i_1),tot}=P_{j(i_2),tot}$.

We then have:

 \begin{thm}\label{th:cascadan-noconsecutivos} Suppose $n\geq 3$, and suppose there exists
  layers $i_1, i_2$, with $1\leq i_1 < i_1+1 < i_2 \leq n$, such that $P_{j(i_1)}=P_{j(i_2)}=F$, 
  $P_{j(i)}$ for $i=i_1 + 1,\dots, i_2 -1$ are all distinct and different from $F$, and with no 
  restriction in the phosphatases of layers $1,\dots,i_1 - 1, i_2+1,\dots,n$. Assume the reaction rate 
  constants and the total concentration constants verify 
\[\beta_{1,i_1,i_2},\, \beta_{2,i_1,i_2},\, \beta_{3,i_1,i_2},\, \beta_{4,i_1,i_2} >0.\]
Then, there exists a rescaling in the constants $k_{on_i}, i=1,\dots,n$ and $\ell_{on_i}$, $i=1,\dots,n$, 
such that the system has at least two positive steady states.
  

\end{thm}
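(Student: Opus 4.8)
The plan is to mirror the proofs of Theorems~\ref{th:cascade2} and~\ref{th:provisorio-cascaden}. First I would substitute the monomial parametrization~\eqref{parametrizationcascadefcualquiera} of the positive steady states into the $n+r+1$ conservation relations~\eqref{conscascadan}, obtaining a system of $d=n+r+1$ Laurent-polynomial equations in the $d$ variables $(s^1_0,\dots,s^1_n,p_1,\dots,p_r)$. Writing this system as $C\cdot(\text{monomials})^t=0$ exhibits its support $\mathcal A\subset\Z^d$ and its coefficient matrix $C\in\R^{d\times m}$: each conservation law gives one row, and the columns are indexed by the linear monomials $s^1_i$ and $p_j$, the product monomials $s^1_ip_{j(i)}$, the ratio monomials $\tfrac{s^1_ip_{j(i)}}{s^1_{i-1}}$, and the constant $1$. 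Then I would apply Theorem~\ref{th:BS}: it suffices to produce two $d$-simplices with vertices in $\mathcal A$ which share a facet and are positively decorated by $C$ in the sense of Definition~\ref{def:dec}, after which the promised rescaling of the $k_{on_i},\ell_{on_i}$ follows exactly as in the proof of Theorem~\ref{th:cascade2}.

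The heart of the argument is the explicit construction of $\Delta_1$ and $\Delta_2$. For every layer $i\notin\{i_1,i_2\}$ I would put the corresponding linear-monomial column $s^1_i$ (and the columns $p_j$ for the non-shared phosphatases) into both simplices, so that these layers are \emph{pass-through}: their rows and columns form an identity block that will not influence the signs of the relevant minors. The two simplices should then differ only in the vertices attached to the coupled layers $i_1,i_2$ and to the shared phosphatase $F$, in direct analogy with the two-layer exchange of the linear column $s^1_{i_2}$ against the ratio column $\tfrac{s^1_{i_2}p_F}{s^1_{i_2-1}}$, both simplices retaining $s^1_{i_2}p_F$ in the common facet. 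The non-consecutivity hypothesis enters precisely here: because the phosphatases of layers $i_1+1,\dots,i_2-1$ are distinct and different from $F$, the conservation law $p_F+\sum_{i\in\Lambda_F}y^1_i=F_{tot}$ couples only the contributions of layers $i_1$ and $i_2$, so the intermediate layers decouple. Unlike the consecutive case, $S^1_{i_1}$ is \emph{not} the kinase of layer $i_2$; the only link between the coupled layers is through $F$, which is why the upstream ratio $\tfrac{S_{i_1-1,tot}}{S_{i_1,tot}}$ (entering through $\beta_{1,i_1,i_2}$) replaces the enzyme ratio $\tfrac{E_{tot}}{F_{tot}}$ of the two-layer analysis.

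The main obstacle I expect is verifying positive decoration, that is, showing that every maximal minor of the two $d\times(d+1)$ submatrices of $C$ is nonzero with the signs $(-1)^i\minor(M,i)$ agreeing, and that this reduces \emph{exactly} to $\beta_{1,i_1,i_2},\dots,\beta_{4,i_1,i_2}>0$. Since $d$ grows linearly with $n$ I would not expand these determinants directly; instead I would eliminate the identity block of the pass-through layers by Laplace expansion along their rows, collapsing each $\minor(M,i)$ to a small determinant on the active rows indexed by $S_{i_1-1,tot},S_{i_1,tot},S_{i_2-1,tot},S_{i_2,tot},F_{tot}$ and the columns built from $s^1_{i_1}p_F$, $s^1_{i_1+1}p_{j(i_1+1)}$, $\tfrac{s^1_{i_1}p_F}{s^1_{i_1-1}}$, $s^1_{i_2}p_F$, $\tfrac{s^1_{i_2}p_F}{s^1_{i_2-1}}$ and the constant. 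After factoring out the positive quantities $K_i,L_i,G_i$ and total constants, the surviving sign conditions should coincide with the four $\beta$ inequalities, just as $\alpha_1,\dots,\alpha_4$ arose in the proof of Theorem~\ref{th:cascade2}.

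The delicate bookkeeping is to track which telescoping term survives in each reduced minor; in particular the appearance of $A_{i_1+1}$ in $\beta_{1,i_1,i_2}$ reflects the kinase role of $S^1_{i_1}$ at layer $i_1+1$, whose $y^0_{i_1+1}$-column sits in the $S_{i_1,tot}$ row and does not decouple. This is exactly the computation abstracted by the extension Theorem~\ref{thm:extending} of the Appendix, which I would invoke to organize the reduction uniformly in $n$ and $r$. Once positive decoration is established, Theorem~\ref{th:BS} furnishes a $t_0>0$ and, for all $0<t<t_0$, at least two nondegenerate positive solutions of the deformed system, which by the change of variables and coefficient scaling used in Theorem~\ref{th:cascade2} are positive steady states of the original network after the stated rescaling of the $k_{on_i}$ and $\ell_{on_i}$.
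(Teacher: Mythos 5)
Your global strategy coincides with the paper's: substitute the monomial parametrization into the conservation laws, exhibit two $d$-simplices sharing a facet that are positively decorated by the coefficient matrix, apply Theorem~\ref{th:BS} (together with Theorem~5.4 of~\cite{AGB1}), and convert the deformation into a rescaling of the $k_{on_i},\ell_{on_i}$. The genuine gap is in the one step that carries the content, namely the choice of the simplices. The paper's construction obeys a design rule: every \emph{active} column (one not paired off with a pass-through row) must have all of its nonzero entries inside the five active rows $S_{i_1-1,tot}$, $F_{tot}$, $S_{i_1,tot}$, $S_{i_2-1,tot}$, $S_{i_2,tot}$, so that $C_{\Delta_j}$ becomes block-diagonal up to the constant column and positive decoration reduces to a single $5\times 6$ block $C_j$. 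Concretely (in the case $i_1=1$, $i_2=n$ treated in the paper) the active columns are $s^1_0$, $s^1_1f$, $s^1_nf$, $s^1_{n-1}p_{j(n-1)}/s^1_{n-2}$, the constant, and the exchanged vertex $s^1_n$ (for $\Delta_1$) versus $s^1_nf/s^1_{n-1}$ (for $\Delta_2$); the pass-through columns are the ratio monomials of layers $2,\dots,n-2$ and the monomials $p_{j(2)},\dots,p_{j(n-1)}$. Your active set violates this rule: the column of $s^1_{i_1+1}p_{j(i_1+1)}$ has nonzero entries in the rows of $S_{i_1+1,tot}$ and $P_{j(i_1+1),tot}$, which you declared pass-through, so the Laplace reduction you invoke is not available and one would instead need the extra sign conditions of Theorem~\ref{thm:extending}, which are not implied by the $\beta$'s. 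Worse, restricted to your five active rows the columns of $s^1_{i_1+1}p_{j(i_1+1)}$ and of $s^1_{i_1}f/s^1_{i_1-1}$ are both supported on the single row of $S_{i_1,tot}$ with positive entries $K_{i_1+1}G_{i_1+1}$ and $G_{i_1}$; hence the kernel of your reduced $5\times 6$ block contains a vector supported on only those two columns with opposite signs, and such a block can never be positively spanning.

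In addition, omitting $s^1_{i_1-1}$ from the active columns forces, through the row of $S_{i_1-1,tot}$, the $s^1_{i_1}f$-coordinate of the kernel vector to equal $S_{i_1-1,tot}/(K_{i_1}G_{i_1})$, and the $F_{tot}$ row then produces the condition $A_{i_1}>S_{i_1-1,tot}/F_{tot}$, which is not among $\beta_{1,i_1,i_2},\dots,\beta_{4,i_1,i_2}$; so even after repairing the proportionality problem your block does not reproduce the stated region. (By contrast, whether the pass-through layers contribute their linear monomials $s^1_i$, as you propose, or the ratio monomials $s^1_ip_{j(i)}/s^1_{i-1}$, as the paper does, is immaterial: each choice places a single positive entry in the corresponding row.) To complete the proof along the paper's lines, keep $s^1_{i_1-1}$ and the ratio monomial of layer $i_2-1$ in the common facet, discard $s^1_{i_1+1}p_{j(i_1+1)}$ and $s^1_{i_1}f/s^1_{i_1-1}$ from the simplices, and let $\Delta_1$ and $\Delta_2$ differ only by the exchange of $s^1_{i_2}$ with $s^1_{i_2}f/s^1_{i_2-1}$; the resulting $C_j$ is triangular enough that its kernel is solved row by row and the four nontrivial positivity conditions are exactly the $\beta$'s.
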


Again, we will give an explicit rescaling in the proof.

\begin{remark} \rm The inequalities in the statement of Theorem~\ref{th:cascadan-noconsecutivos} 
are compatible. They have a similar flavour, but they are different from the conditions defining the
regions of multistationarity in Theorems~\ref{th:cascade2} and~\ref{th:provisorio-cascaden}.
\end{remark}

\subsection{The proof of Theorem~\ref{th:provisorio-cascaden}}

 \begin{proof}[Proof of Theorem \ref{th:provisorio-cascaden}] 
Without loss of generality we suppose that the phosphatase in the layers ${i_0}$ and ${i_0}+1$
 is the phosphatase $P_1$, that we call $F$. We showed in (\ref{parametrizationcascadefcualquiera})
  that we can parametrize the steady states in terms of the concentrations of $s^1_i$, 
  for $i=0,\dots,n$, $f$ (we use $f$ instead of $p_1$) and $p_i$, for $i=2,\dots,r$. 
  To avoid unnecessary notation, in this proof we call $s_i=s^1_i$ for all $i=0,\dots,n$.

Consider the following set of monomials:
\[\mathcal{M}=\{s_{{i_0}-1},f,s_{{i_0}},s_{{i_0}+1},
s_{i_0}f,s_{{i_0}+1}f,s_{{i_0}}f(s_{{i_0}-1})^{-1},s_{{i_0}+1}f(s_{{i_0}})^{-1},1\}.\]
These monomials appear in the parametrization of the concentration at steady state of 
the species in layers ${i_0}$ and ${i_0}+1$. 
Now, consider the set
\[\mathcal{M'}=\mathcal{M}\cup \{s_{0},s_{1},\dots,s_{{i_0}-2},s_{{i_0}+2},\dots,s_{n},p_2,\dots,p_r\}.\]
And consider also the set of all the monomials that appear in the parametrization:
\begin{align*}
\mathcal{M''}=\mathcal{M'}\cup &\{s_1p_{j(1)},\dots,s_{{i_0}-1}p_{j({i_0}-1)},s_{{i_0}+2}p_{j({i_0}+2)},
\dots,s_np_{j(n)},s_{1}p_{j(1)}(s_{0})^{-1},\dots\\
&\dots,s_{{i_0}-1}p_{j({i_0}-1)}(s_{{i_0}-2})^{-1},s_{{i_0}+2}p_{j({i_0}+2)}(s_{{i_0}+1})^{-1},\dots,s_{n}p_{j(n)}(s_{n-1})^{-1}\}.
\end{align*}
We have $n+r+1$ variables: $s_0,s_1,\dots,s_n,f,p_2,\dots,p_r$. Consider the variables 
with this last order. Let $\mathcal{A}, \mathcal{A'}$, $\mathcal{A''}\subset{\R^{n+r+1}}$ be
the subsets corresponding to the supports of the sets $\mathcal{M}, \mathcal{M'}$, $\mathcal{M''}$ 
respectively, that is, the exponents of the monomials in each set.

We consider an order in $\mathcal{A''}$ given by the order in which 
we construct $\mathcal{M''}$: first the exponents corresponding to monomials in $\mathcal{M}$ 
(in that order), then the exponents corresponding to monomials that we add to obtain $\mathcal{M'}$
(in that order), and then the rest of the exponents, in the same order as enumerated above. 
 We have $3n+r+2$ monomials. 

As in the case $n=2$, we replace the monomial parametrization into the conservation laws~\eqref{conscascadan} 
and we write this system in a matricial form.  
We call $C\in \R^{(n+r+1)\times(3n+r+2)}$ the matrix of coefficients of the resulting polynomial system.

We want to find two simplices with vertices in $\mathcal{A''}$ which share a facet. Inspired by the $4$-simplices 
that we chooses for the case $n=2$, we take the following $(n+r+1)$-simplices:
\begin{align*}
\Delta_1&=\{e_{i_0},\, e_{{i_0}+1}+e_{n+2},\, e_{{i_0}+2}+e_{n+2},\, e_{{i_0}+2},\, 0\}
\cup(\mathcal{A'}\setminus \mathcal{A}),\\
\Delta_2&=\{e_{i_0},\, e_{{i_0}+1}+e_{n+2},\, e_{{i_0}+2}+e_{n+2},\, e_{{i_0}+2}+e_{n+2}-e_{{i_0}+1},\, 0\}
\cup(\mathcal{A'}\setminus \mathcal{A}).
\end{align*}
where $e_i$ denotes the $i$-th canonical vector of $\R^{n+r+1}$. Note that the points $e_{i_0}$, $e_{{i_0}+2}$,
 $e_{{i_0}+1}+e_{n+2}$, $e_{{i_0}+2}+e_{n+2}$, $0$ correspond to the monomials $s_{{i_0}-1}$,  $s_{{i_0}+1}$, 
 $s_{{i_0}}f$, $s_{{i_0}+1}f$, $1$, and the points $e_{i_0}$, $e_{{i_0}+1}+e_{n+2}$ ,$e_{{i_0}+2}+e_{n+2}$,
  $e_{{i_0}+2}+e_{n+2}-e_{{i_0}+1}$, $0$, correspond to the monomials $s_{{i_0}-1}$, $s_{{i_0}}f$,
   $s_{{i_0}+1}f$, $s_{{i_0}+1}f(s_{i_0})^{-1}$, $1$ which are in correspondence 
   with the points of the simplices in the proof of Theorem~\ref{th:cascade2}.

We consider first the equations corresponding to the conservation laws with total conservation 
constants $S_{{i_0}-1,tot}$, $F_{tot}$, $S_{{i_0},tot}$, $S_{{i_0}+1,tot}$ and then the equations corresponding 
to the conservation constants $S_{0,tot}$, $\dots$, $S_{{i_0}-2,tot,}$, $S_{{i_0}+2,tot}$,$\dots$, $S_{n,tot}$,
 $P_{2,tot}$, $\dots$, $P_{n,tot}$. The submatrices of  $C$ restricted to the columns
  corresponding to the simplex $\Delta_j$, for $j=1,2$, are equal to:
\begin{small}
\[C_{\Delta_j} =\left( \begin{array}{cccc|ccc}
& & C_j & & & 0 & \\ \hline
0 & \dots & 0 & -S_{0,tot}& &  &  \\
\vdots & \ddots &\vdots & \vdots &  &  & \\
0 & \dots & 0 & -S_{{i_0}-2,tot}& &  &  \\
0 & \dots & 0 & -S_{{i_0}+2,tot}& &  &  \\
\vdots & \ddots &\vdots & \vdots &  & \Id_{n+r-3}  & \\
0 & \dots & 0 & -S_{n,tot}& & & \\
0 & \dots & 0 & -P_{2,tot}& & & \\
\vdots & \ddots &\vdots & \vdots &  &  & \\
0 & \dots & 0 & -P_{r,tot}& & & \\
\end{array}\right),\]
\end{small}
where $C_1$ is the submatrix corresponding to columns of the exponents
 $\{e_{i_0},e_{{i_0}+1}+e_{n+2},e_{{i_0}+2}+e_{n+2},e_{{i_0}+2},0\}$ and $C_2$ is the submatrix corresponding
  to the columns of the exponents $\{e_{i_0},e_{{i_0}+1}+e_{n+2},e_{{i_0}+2}+e_{n+2},e_{{i_0}+2}+e_{n+2}-e_{{i_0}+1},0\}$, that is:
	\begin{equation*}\label{matrixc1}
	C_j=\begin{pmatrix}
	1 & K_{i_0}G_{i_0} & 0 & 0 & -S_{{i_0}-1, tot}\\
	0  & L_{i_0} & L_{{i_0}+1} & 0 & -F_{tot}\\
	0  & K_{i_0}G_{i_0} + L_{i_0} & K_{{i_0}+1}G_{{i_0}+1} & 0 & -S_{{i_0},tot} \\
	0  & 0 & K_{{i_0}+1}G_{{i_0}+1} + L_{{i_0}+1} & (C_j)_{44} & -S_{{i_0}+1,tot}
	\end{pmatrix},
	\end{equation*}
where $(C_1)_{44}=1$ and $(C_2)_{44}=G_{{i_0}+1}$.

Note that the matrix $C_{\Delta_j}$ is positively spanning if and only if $C_j$ is positively spanning, 
for $j=1,2$. Moreover, $C_1$ and $C_2$ are positively spanning if and only if the condition of the statement holds,
that is,  $A_{i_0} + 1 > A_{{i_0}+1}$
and $\alpha_{1,{i_0}}, \alpha_{2,{i_0}}, \alpha_{3,{i_0}}, \alpha_{4,{i_0}} >0$; or 
$A_{i_0} + 1 < A_{{i_0}+1}$
 and $\alpha_{1,{i_0}}, \alpha_{2,{i_0}}, \alpha_{3,{i_0}}, \alpha_{4,{i_0}} <0$.

Given $h\in\mathcal{C}_{\Delta_1,\Delta_2}$, by Theorem~\ref{th:BS}
there exists $t_0 \in \R_{>0}$ such that for all $0<t<t_0$, 
the number of positive (nondegenerate) solutions of the scaled system, that is, the system with the 
same support $\mathcal{A''}$ and matrix of coefficients $C_t$ with $(C_t)_{ij}=t^{h(\alpha_j)}c_{ij}$, 
with $\alpha_j\in \mathcal{A''}$ and $C=(c_{ij})$, is at least two. 
This system has the following form. Call $x=(s_0,s_1,\dots,s_n,f,p_2,\dots,p_r)$ and note that each
coefficient $c_{ij}$ is a rational function of the vector of reaction rate constants that we call $\kappa=(k_{on_1}, \ell_{on_1}, \dots,)$. 
To emphasize this,
we write $c_{ij}=c_{ij}(\kappa)$. Moreover, setting $\gamma_j=t^{h(\alpha_j)}$ for any $j$, we have
a Laurent polynomial system of $n+r+1$ equations in $n+r+1$ variables:
\begin{equation}\label{eq:5.4}
\sum_j c_{ij}(\kappa) \, \gamma_j \, x^{\alpha_j} \, = \, 0, \, i=1, \dots, n+r+1.
\end{equation}
Now, the reaction network we are considering satisfies the hypotheses of Theorem~5.4 of \cite{AGB1}. 
 Then, there exists a vector of reaction rate 
constants $\bar \kappa$ such that the number of positive solutions of system~\eqref{eq:5.4} coincides
with the number of positive solutions of the following system:
\begin{equation}\label{eq:5.5}
\sum_j c_{ij}(\bar \kappa) \, x^{\alpha_j} \, = \, 0, \, i=1, \dots, n+r+1.
\end{equation}
We now describe the associated vector $\bar \kappa$ in an explicit form.
We first describe the cone $\mathcal{C}_{\Delta_1,\Delta_2}$ defined in~\eqref{eq:Delta12}.
We denote by $h_j$ the height corresponding to $\alpha_j\in\mathcal{A''}$, for $j=1,\dots,3n+r+2$ (in the 
order corresponding to the construction of $\mathcal M''$ that we described before). 
Let $\varphi_1$ and $\varphi_2$ be the affine linear functions  
 which agree with $h$ on the simplices $\Delta_1$ and $\Delta_2$ respectively. 
 We can take the heights of the points of $\Delta_1$ 
as zero, that is, $h_1=h_4=h_5=h_6=h_9=0$ and $h_j=0$ for $j=10,\dots,n+r+1$, and $h_8>0$
 (the height of the remaining point of $\Delta_2$ which is not in $\Delta_1$). Then,
 $\varphi_1(x_1,\dots,x_{n+r+1})=0$ and $\varphi_2(x_1,\dots,x_{n+r+1})=- h_8\, x_{{i_0}+1} - h_8\, x_{{i_0}+2}+h_8\, x_{n+2}$.
 Moreover, $h$ satisfies $0=\varphi_1(\alpha)<h(\alpha)$, for all $\alpha\not\in \Delta_1$ and 
 $\varphi_2(\alpha)<h(\alpha)$, for all $\alpha\not\in \Delta_1, \Delta_2$. Then, we have:
\vskip -10pt
\begin{align*}
&h_8 < h_2,\ \quad 0 < h_3, \quad 0 < h_7,\\
&h_j >0,\ \mbox{for} \ j=n+r+7,\dots,3n+r+2, \\
&h_8 < h_{n+r+6+j},\ \mbox{for} \ j\in\Lambda_1, \mbox{with}\ j= 1,\dots,{i_0}-1, \\
&h_8 < h_{n+r+4+j},\ \mbox{for} \ j\in\Lambda_1, \mbox{with}\ j= {i_0}+2,\dots,n, \\
&h_8 < h_{2n+r+4+j}\ \mbox{for}\ j\in\Lambda_1, \mbox{with}\  j=1,\dots,{i_0}-1,\\
&2h_8 < h_{2n+r+4+i},\ \mbox{if} \ {i_0}+2\in\Lambda_1, \mbox{and }\ h_8 < h_{2n+r+2+j}\
 \mbox{for}\ j\in\Lambda_1, \mbox{with}\  j={i_0}+3,\dots,n.
\end{align*}
where $h_2,h_3,h_7$ and $h_{j}$, $j=n+r+7,\dots,3n+r+2$, are generic.
  
If we change the variables $\bar f = t^{h_2}f$, $\bar s_{i_0} = t^{h_3}s_{i_0}$, we consider the constants:

\vskip -10pt
\begin{align}\label{scalingcascadegeneraln}
\overline{K_{i_0}}&=t^{-h_7}K_{i_0}, \quad
 \overline{K_{{i_0}+1}}=t^{-h_3-h_8}K_{{i_0}+1},\quad \overline{L_{i_0}}=t^{-h_2-h_3}L_{i_0},\quad
 \overline{L_{{i_0}+1}}=t^{-h_2}L_{{i_0}+1},  \nonumber\\
 \overline{K_i}&=t^{h_{n+r+6+i}-h_{2n+r+4+i}}K_i, i=1,\dots,{i_0}-1, \nonumber\\ 
 \overline{K_i}&=t^{h_{n+r+4+i}-h_{2n+r+2+i}}K_i, i={i_0}+2,\dots,n,\\
  \overline{L_i}&=t^{h_{n+r+6+i}-h_2}L_i,\ \mbox{if} \ i\in\Lambda_1, \
   \overline{L_i}=t^{h_{n+r+6+i}}L_i,\ \mbox{if} \ i\not \in\Lambda_1, \ \text{for } i=1,\dots,{i_0}-1, \nonumber \\
    \overline{L_i}&=t^{h_{n+r+4+i}-h_2}L_i,\ \mbox{if} \ i\in\Lambda_1, \ 
    \overline{L_i}=t^{h_{n+r+4+i}}L_i,\ \mbox{if} \ i\not \in\Lambda_1, \ \text{for } i={i_0}+2,\dots,n, \nonumber 
\end{align}
and we keep fixed the values of the constants $k_{\rm{cat}_{i_0}}$, $k_{\rm{cat}_{{i_0}+1}}$, ${\ell_{\rm{cat}_{i_0}}}$ 
 and ${\ell_{\rm{cat}_{{i_0}+1}}}$ and the total conservation constants $S_{{i_0}-1,tot}$, 
 $F_{tot}$, $S_{{i_0},tot}$ and $S_{{i_0}+1,tot}$, then the
  dynamical system associated with the network with these constants is system~\eqref{eq:5.5} with coefficients
  depending on the scaled reaction constants.
Therefore, the cascade we are considering has at least two positive steady states for these constants.

  To get the scalings in (\ref{scalingcascadegeneraln}) it can be checked that
   it is enough to rescale the original constants as follows:
  \begin{align*}
\overline{k_{on_{\rm {i_0}}}}&=t^{-h_7}{k_{on_{\rm {i_0}}}}, \quad
 \overline{{k_{on_{\rm {i_0}+1}}}}=t^{-h_3-h_8}{k_{on_{\rm {i_0}+1}}},\quad \overline{{\ell_{on_{\rm {i_0}}}}}=t^{-h_2-h_3}{\ell_{on_{\rm {i_0}}}},\quad
 \overline{{\ell_{on_{\rm {i_0}+1}}}}=t^{-h_2}{\ell_{on_{\rm {i_0}+1}}},  \nonumber\\
 \overline{{k_{on_{\rm i}}}}&=t^{h_{n+r+6+i}-h_{2n+r+4+i}}{k_{on_{\rm i}}}, i=1,\dots,{i_0}-1, 
 \nonumber\\ \overline{{k_{on_{\rm i}}}}&=t^{h_{n+r+4+i}-h_{2n+r+2+i}}{k_{on_{\rm i}}}, i={i_0}+2,\dots,n,\\
  \overline{{\ell_{on_{\rm i}}}}&=t^{h_{n+r+6+i}-h_2}{\ell_{on_{\rm i}}},\ \mbox{if} \ i\in\Lambda_1, 
  \ \overline{{\ell_{on_{\rm i}}}}=t^{h_{n+r+6+i}}{\ell_{on_{\rm i}}},\ \mbox{if} \ i\not \in\Lambda_1, \ \text{for } i=1,\dots,{i_0}-1, \nonumber \\
    \overline{{\ell_{on_{\rm i}}}}&=t^{h_{n+r+4+i}-h_2}{\ell_{on_{\rm i}}},\ \mbox{if} \ i\in\Lambda_1, 
    \ \overline{{\ell_{on_{\rm i}}}}=t^{h_{n+r+4+i}}{\ell_{on_{\rm i}}},\ \mbox{if} \ i\not \in\Lambda_1, \ \text{for } i={i_0}+2,\dots,n. \nonumber 
\end{align*}

 \end{proof}

  \subsection{The proof of Theorem \ref{th:cascadan-noconsecutivos}}
 
   For simplicity and to fix ideas, 
we only present a proof in the case when the first and the last layer have the same phosphatase (that is, $i_1=1, i_2=n$),
and the  other layers have all different phosphatases. We also present an explicit rescaling for this case. 
 The general case is similar, but with a heavier notation.

\begin{proof}[Proof of Theorem \ref{th:cascadan-noconsecutivos}] 
We call $f$ the concentration of the phosphatase $F$, the phosphatase that appear in the layers $1$ and $n$, 
and we call $f_i=p_{j(i)}$ for $i=2,\dots,n-1$. By assumption the variables $f_i$ are all distinct and different from $f$. 
We showed in \eqref{parametrizationcascadefcualquiera} that we can parametrize the steady states in terms of the
 concentrations of $s^1_i$, for $i=0,\dots,n$, $f$ and $f_i$, for $i=2,\dots,n-1$. To avoid unnecessary notation, 
 in this proof we call again $s_i=s^1_i$ for all $i=0,\dots,n$.

We have $2n$ variables, and we consider these $2n$ variables with the following order: $s_0$, $s_1,\dots,s_n$, 
$f$, $f_2,\dots, f_{n-1}$. In the monomial parametrization there are $4n+1$ different monomials, and we consider 
these monomials with this order: $s_0$, $s_1,\dots,s_n$, $f$, $f_2,\dots, f_{n-1}$, $s_1f$, $s_2f_2,
\dots,s_{n-1}f_{n-1}$, $s_nf$, $s_1f(s_0)^{-1}$, $s_2f_2(s_1)^{-1},\dots, s_{n-1}f_{n-1}(s_{n-2})^{-1}$, $s_nf(s_{n-1})^{-1}, 1$.

As in the previous cases, we replace the monomial parametrization into the conservation laws and we write 
this system in  matricial form. Let $C\in \R^{(2n)\times(4n+1)}$ be  the matrix of coefficients
 of this polynomial system. We call $\mathcal{A}$ the support of the system.

We want to find two simplices with vertices in $\mathcal{A}$ which share a facet.
 We denote $\mathcal{B}\subset\mathcal{A}$ the set of the exponents
  corresponding to the monomials:  $s_2f_2(s_1)^{-1}$,$\dots$, 
  $s_{n-2}f_{n-2}(s_{n-3})^{-1}$, $f_2,\dots,f_{n-1}$. 
  We consider the two following simplices: $\Delta_1$ given by the exponents corresponding to the monomials 
  $s_0$, $s_1f$, $s_nf$, $s_{n-1}f_{n-1}(s_{n-2})^{-1}$, $s_n$, $1$, 
  and the points in $\mathcal{B}$, and $\Delta_2$ given by the exponents corresponding to the monomials 
  $s_0$, $s_1f$, $s_nf$, $s_{n-1}f_{n-1}(s_{n-2})^{-1}$, $s_nf(s_{n-1})^{-1}$, $1$
   and the points in $\mathcal{B}$. That is:
\begin{align*}
\Delta_1&=\{e_1,e_2+e_{n+2},e_{n+1}+e_{n+2},e_{n}+e_{2n}-e_{n-1}, e_{n+1},0\}\cup\mathcal{B},\\
\Delta_2&=\{e_1,e_2+e_{n+2},e_{n+1}+e_{n+2},e_{n}+e_{2n}-e_{n-1}, e_{n+1}+e_{n+2}-e_n,0\}\cup\mathcal{B},
\end{align*}
where $e_i$ denotes the $i$-th canonical vector of $\R^{2n}$.

If we consider first the equations corresponding to the conservation laws with total 
conservation constants $S_{0,tot}$, $F_{tot}$, $S_{1,tot}$, $S_{n-1,tot}$, $S_{n,tot}$ 
and then the equations corresponding to the conservation constants 
$S_{2,tot},\dots,S_{n-2,tot}$, $F_{2,tot},\dots, F_{n-1,tot}$, the submatrices of 
$C$ restricted to the columns corresponding to the simplices $\Delta_j$ for $j=1,2$ 
 are as follows. We change the order of the columns following the order of the monomials in each simplex;
 the property of being positively spanning remains invariant: 
\[C_{\Delta_j} =\left( \begin{array}{cccc|ccc|ccc}
& & C_j & & & 0 & & & 0  &\\ \hline
0 & \dots & 0 & -S_{2,tot}& &  & & &  & \\
\vdots & \ddots &\vdots & \vdots &  & G  & & & 0  &\\
0 & \dots & 0 & -S_{n-2,tot}& &  &  & &  &\\ \hline
0 & \dots & 0 & -F_{2,tot}& & & & &  &\\
\vdots & \ddots &\vdots & \vdots &  & 0 &  & & \Id_{n-2}   &\\
0 & \dots & 0 & -F_{n-1,tot}& & & & &  &\\

\end{array}\right),\]
where $G\in\R^{(n-3)\times(n-3)}$ is the diagonal matrix with $G_{ii}=G_{i+1}$,
 for $i=1,\dots,n-3$, $C_1$ is the submatrix corresponding to columns of the exponents 
 $\{e_1,e_2+e_{n+2},e_{n+1}+e_{n+2},e_{n}+e_{2n}-e_{n-1}, e_{n+1},0\}$
and $C_2$ is the submatrix corresponding to the columns of the exponents 
$\{e_1,e_2+e_{n+2},e_{n+1}+e_{n+2},e_{n}+e_{2n}-e_{n-1}, e_{n+1}+e_{n+2}-e_n,0\}$:
	\begin{equation}
	C_j=\begin{pmatrix}
	1 & K_1G_1 & 0 & 0 & 0 & -S_{0, tot}\\
	0 & L_1 & L_n & 0 & 0 & -F_{tot}\\
	0 &  K_1G_1 + L_1 & 0 & 0 & 0 & -S_{1,tot} \\
	0 & 0 & K_nG_n & G_{n-1} & 0 &  -S_{n-1,tot} \\
	0 & 0 &  K_{n}G_{n} + L_{n}  & 0 & (C_j)_{55} & -S_{n,tot}
	\end{pmatrix},
	\end{equation}
with $(C_1)_{55}=1$ and $(C_2)_{55}=G_n$.

We observe that the matrix $C_{\Delta_j}$ if positively spanning if and only 
if $C_j$ is positively spanning, for $i=1,2$. It is straightforward to check that
the conditions  under which $C_1$ and $C_2$
 are positively spanning are equivalent to the conditions of the statement: $\beta_{1,1,n}$, 
 $\beta_{2,1,n}$, $\beta_{3,1,n}$, $\beta_{4,1,n}>0$.

Given $h\in\mathcal{C}_{\Delta_1,\Delta_2}$, by Theorem~\ref{th:BS}
 there exists $t_0\in\R_{>0}$ such that for all $0<t<t_0$, the number of 
 positive (nondegenerate) solutions of the scaled system, i.e. the system with support 
 $\mathcal{A}$ and matrix of coefficients $C_t$, with $(C_t)_{ij}=t^{h(\alpha_j)}c_{ij}$,
  (with $\alpha_j\in\mathcal{A}$, $C=(c_{ij})$) is at least two.
  
We now argue as in the proof of Theorem~\ref{th:provisorio-cascaden}. 
We can write our system in the form~\eqref{eq:5.4}, and 
since any cascade of Goldbeter-Koshland loops satisfies
the hypotheses of Theorem~5.4 of \cite{AGB1}, we again deduce
 the existence of a vector of rate constants $\bar \kappa$ such that the number of
 positive solutions of this system coincides with
 the number of positive solutions of  the corresponding system of the form~\eqref{eq:5.5}.
   In what follows, we also explicitly describe the rescaling of the parameters.

We denote  by $h_j$  the height corresponding to $\alpha_j\in\mathcal{A}$, 
for $j=1,\dots,4n+1$, with the order of the monomials as we described before. 
Let $\varphi_1$ and $\varphi_2$ be the affine linear functions which agree 
with $h$ on the simplices $\Delta_1$ and $\Delta_2$ respectively.

We can take zero heights at the points of $\Delta_1$, that is, $h_1=h_{2n+1}=
h_{3n}=h_{4n-1}=h_{n+1}=h_{4n+1}=0$, $h_j=0$
 for $j=n+3,\dots,2n$, $h_j=0$, for $j=3n+2,\dots,4n-2=0$ if $n>3$ (note that if $n=3$, 
$h_{3n+2}=h_{4n-1}$, already defined) and $h_{4n}>0$
(the height of the other point of $\Delta_2$). Then, 
 $\varphi_1(x_1,\dots,x_{2n})=0$ and $\varphi_2(x_1,\dots,x_{2n})=-h_{4n}\sum_{i=2}^n x_{i} + h_{4n}\, x_{n+2}$.
 
As $h\in\mathcal{C}_{\Delta_1,\Delta_2}$, we have that $h$ satisfies 
$0=\varphi_1(\alpha)<h(\alpha)$, for all $\alpha\not\in 
\Delta_1$ and $\varphi_2(\alpha)<h(\alpha)$, for all $\alpha\not\in \Delta_1, \Delta_2$. 
Then, we have these conditions:
\vskip -10pt
\begin{align*}
&h_{4n}<h_{n+2},\qquad h_j>0,\ \mbox{for} \ j=2,\dots,n, 2n+2,\dots, 3n-1, 3n+1,
\end{align*}
where $h_j$ for $j=2,\dots,n,n+2,2n+2,\dots,3n-1, 3n+1$ are generic.

If we change the variables $\bar s_i = t^{h_{i+1}}s_i$, for $i=1,\dots,n-1$ and 
$\bar f = t^{h_{n+2}}f$, we consider the constants:
\begin{equation}\label{scalingcascadegeneralnnoconsecutivas}
\begin{matrix}
\overline{K_1}&=&t^{-h_{3n+1}}K_1, &\quad&
 \overline{L_1}&=&t^{-h_2-h_{n+2}}L_1,& \\
 \overline{K_i}&=&t^{h_{2n+i}-h_i}K_i,&\quad & \overline{L_i}&=
 &t^{h_{2n+i}-h_{i+1}}L_i,&\quad \text{for } i=2,\dots,n-1,\\
\overline{K_n}&=&t^{-h_{n}-h_{4n}}K_n,& \quad
 &\overline{L_n}&=&t^{-h_{n+2}}L_n, &\quad
\end{matrix}
\end{equation}
without altering the values of the constants $k_{\rm{cat}_1}$, $k_{\rm{cat}_n}$, 
${\ell_{\rm{cat}_1}}$, ${\ell_{\rm{cat}_n}}$ and the total conservation values, then the
  dynamical system associated with the network with these constants is the scaled system.
Therefore,  the network has at least two positive steady states for this choice of constants.

It is straightforward to check that to get the scalings in (\ref{scalingcascadegeneralnnoconsecutivas})
 it is enough to rescale the original constants as follows:
\begin{equation*}
\begin{matrix}
\overline{k_{on_{\rm 1}}}&=&t^{-h_{3n+1}}k_{on_{\rm 1}}, &\quad&
 \overline{\ell_{on_{\rm 1}}}&=&t^{-h_2-h_{n+2}}\ell_{on_{\rm 1}},& \\
 \overline{k_{on_{\rm i}}}&=&t^{h_{2n+i}-h_i}k_{on_{\rm i}},&\quad &
  \overline{\ell_{on_{\rm i}}}&=&t^{h_{2n+i}-h_{i+1}}\ell_{on_{\rm i}},&\quad \text{for } i=2,\dots,n-1,\\
\overline{k_{on_{\rm n}}}&=&t^{-h_{n}-h_{4n}}k_{on_{\rm n}},& \quad
 &\overline{\ell_{on_{\rm n}}}&=&t^{-h_{n+2}}\ell_{on_{\rm n}}. &\quad
\end{matrix}
\end{equation*}

\end{proof}

 \section{Acknowledgment} The authors are grateful to the Kurt and Alice 
 Wallenberg Foundation and to the Institut Mittag-Leffler, Sweden, for their
 support to work on this project. We are also grateful to 
 the Mathematics Department of the Royal Institute of Technology, Sweden,
 for the wonderful hospitality we enjoyed, and to the French Program PREFALC and the University of Buenos Aires, which made possible the visit of F. Bihan.

\appendix

\section{General statements behind our results about cascades}

 In the proof of Theorem~\ref{th:provisorio-cascaden} we extrapolated the multistationarity behaviour 
 and the description of a region of multistationarity 
 of a subnetwork (described in~\ref{th:cascade2}) to the whole network, 
 even if it has more linearly independent conservation relations. For this, we 
 developed some ideas that we now abstract in Theorem~\ref{thm:extending} and 
 that can be used in the study of other cascade mechanisms. As we remarked at the
 end of the Introduction, they can be applied to describe a multistationarity
 region for the Ras cascade in Figure~\ref{fig:RAS} (see~\cite{MerT}
 for details about different models),  extrapolating our results about a single layer with
 two sequential phosporilations proved in Theorem~4.1 in~\cite{AGB1}.
 We assume the reader is familiar with the
 content of Section 2 in our companion paper~\cite{AGB1}. 
 
 We start with a couple of lemmas. Given a lattice configuration $\mathcal A$, we will denote by ${\rm Aff}(\mathcal A)$ the affine span
 of $\mathcal A$ consisting of of all points $\sum_{a\in \mathcal A} \lambda_a  \cdot a$ with
 $\lambda_a \in \Z$ for all $a \in \mathcal A$ and $\sum_{a \in \mathcal A} \lambda_a=0$.
  
\begin{lemma}\label{lem:samedim}
	Let $\mathcal{A} \subset \mathcal{A'} \subset \Z^d$ be finite point configurations, with 
	 ${\rm Aff}(\mathcal{A})={\rm Aff}(\mathcal{A'})=\Z^d$. Suppose that $\tau$ 
	  is a regular subdivision of $\mathcal{A}$. Then, there exists 
	  a regular subdivision $\tau'$ of  $\mathcal{A'}$ such that $\tau \subset \tau'$. 
	  Moreover,  we can choose a lifting function $h'$ inducing $\tau'$ such that $h:=h'|_{\mathcal{A}}$ induces $\tau$.
	
\end{lemma}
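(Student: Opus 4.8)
The plan is to leave the given lifting unchanged on $\mathcal{A}$ and to assign to every new point of $\mathcal{A'}\setminus\mathcal{A}$ one common, sufficiently large height. The guiding intuition is that lifting the new points very high keeps them strictly above the lower hull of the lifted points of $\mathcal{A}$ over $\conv(\mathcal{A})$, so they can neither subdivide nor otherwise disturb a cell of $\tau$; they merely contribute extra cells covering the region $\conv(\mathcal{A'})\setminus\conv(\mathcal{A})$.

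First I would fix a lifting function $h\colon\mathcal{A}\to\R$ inducing $\tau$ and record, for each maximal cell $\sigma\in\tau$, the affine function $\ell_\sigma\colon\R^d\to\R$ supporting the corresponding lower facet, so that $\ell_\sigma(a)=h(a)$ for $a\in\sigma$ and $\ell_\sigma(a)\le h(a)$ for every $a\in\mathcal{A}$, with equality precisely on the points of $\sigma$. Because $\mathcal{A'}$ is finite and each $\ell_\sigma$ is affine, the number $N_0=\max_{\sigma}\ \max_{a'\in\mathcal{A'}\setminus\mathcal{A}}\ell_\sigma(a')$ is finite. I would then set $h'|_{\mathcal{A}}=h$ and $h'(a')=N$ for every $a'\in\mathcal{A'}\setminus\mathcal{A}$, where $N>N_0$ is fixed, and let $\tau'$ be the regular subdivision of $\mathcal{A'}$ induced by $h'$; this is legitimate since $\mathrm{Aff}(\mathcal{A'})=\Z^d$ makes $\mathcal{A'}$ full-dimensional.

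The verification should then be short. For every maximal cell $\sigma\in\tau$ the affine function $\ell_\sigma$ satisfies $\ell_\sigma\le h'$ on all of $\mathcal{A'}$: on $\mathcal{A}$ this is the defining inequality for $\tau$, and at each new point $a'$ one has $\ell_\sigma(a')\le N_0<N=h'(a')$. Since equality still holds exactly on the points of $\sigma$, the set $\sigma$ is a lower face of the lift of $\mathcal{A'}$, hence a cell of $\tau'$; taking faces (which coincide with the faces of $\sigma$ as a cell of $\tau$, no new point lying on $\sigma$) then yields $\tau\subset\tau'$. The \emph{moreover} clause is automatic, because $h'|_{\mathcal{A}}=h$ induces $\tau$ by construction.

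The one place I expect to need care, and the conceptual heart of the statement, is the hypothesis $\mathrm{Aff}(\mathcal{A})=\mathrm{Aff}(\mathcal{A'})=\Z^d$. It guarantees that $\mathcal{A}$ and $\mathcal{A'}$ have the same full dimension, so that the maximal cells of $\tau$ are genuinely $d$-dimensional and reappear as cells of $\tau'$ rather than being absorbed into higher-dimensional faces; without equal dimension the inclusion $\tau\subset\tau'$ would not even be meaningful. The elementary ``lift the new points high'' argument is otherwise routine, so I would concentrate the exposition on checking that each supporting function $\ell_\sigma$ keeps supporting exactly $\sigma$ once the high points are adjoined.
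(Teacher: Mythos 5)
Your proof is correct and is essentially the paper's argument: the paper takes $h'=h_{\mathcal A,\mathcal A'}+\epsilon\cdot h_\tau$ with $h_{\mathcal A,\mathcal A'}$ vanishing on $\mathcal A$ and positive on the new points and $\epsilon$ small, which after dividing by $\epsilon$ (a rescaling that does not change the induced subdivision) is exactly your ``keep $h$ on $\mathcal A$ and lift the new points above every supporting affine function $\ell_\sigma$'' construction. Your explicit bound $N_0$ just makes quantitative the ``$\epsilon$ small enough'' step that the paper leaves implicit.
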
	

\begin{proof}
	Let $h_{\tau}: \mathcal A \to \R$ be any lifting function inducing the subdivision $\tau$. 
	Let $h_{\mathcal A, \mathcal A'}: \mathcal A' \to \R$ be  any lifting function which is zero on
	$\mathcal A$ and positive otherwise. 
		 Extending $h_{\tau}$ by zero outside $\mathcal A$,
	 we get that for $\epsilon >0$ small enough the function $h':=h_{\mathcal A, \mathcal A'}+ \epsilon \cdot h_{\tau}$ induces a 
	 regular subdivision $\tau '$ of $\mathcal A'$ 
	containing the cells in $\tau$
	and $h:=h'|_{\mathcal{A}}=\epsilon \cdot h_{\tau}$ induces $\tau$.
\end{proof}

  \begin{lemma}\label{prop:diffdim}
  	Let $\mathcal{A} \subset \mathcal{A'} \subset \Z^{d'}$ be finite point configurations, 
  	with  ${\rm rk}{\rm Aff}(\mathcal{A})=d < {\rm rk}{\rm Aff}(\mathcal{A'})= d'$. 
  	Assume moreover that $\mathcal{A'} \setminus \mathcal{A}$
  	has cardinality $d'-d$. 
  	  	Suppose that $\tau$ is a regular subdivision (triangulation) of $\mathcal A$.
  	  	 For each $\sigma \in \tau$ define $\sigma'= \sigma
  	\cup (\mathcal{A'} \setminus \mathcal{A})$. Then, the collection $\tau':=\{\sigma', \sigma \in \tau\}$ 
  	defines a  regular subdivision (triangulation) of $\mathcal{A}'$. 
  Moreover, $\tau'$ can be induced by a lifting function $h'$ whose
  	restriction to $\mathcal{A}$ induces $\tau$.
  \end{lemma}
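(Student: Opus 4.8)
The plan is to prove the statement by explicitly extending a lifting function. Fix a lifting function $h\colon\mathcal A\to\R$ inducing $\tau$ and extend it to an arbitrary $h'\colon\mathcal A'\to\R$ with $h'|_{\mathcal A}=h$, assigning any finite heights to the points of $B:=\mathcal A'\setminus\mathcal A$; I claim this $h'$ already induces the desired subdivision $\tau'=\{\sigma':\sigma\in\tau\}$. Throughout, $\mathrm{aff}(\mathcal A)$ and $\mathrm{aff}(\mathcal A')$ denote the real affine spans, of dimensions $d$ and $d'$. The starting observation is that, since the affine dimension jumps from $d$ to $d'$ while only $|B|=d'-d$ points are added, each point of $B$ must contribute a new affine direction; equivalently, the classes of the points of $B$ modulo the direction space of $\mathrm{aff}(\mathcal A)$ form a basis of the $(d'-d)$-dimensional complement of that direction space inside the one of $\mathrm{aff}(\mathcal A')$. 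In particular $\sigma\cup B$ spans $\mathrm{aff}(\mathcal A')$ for every cell $\sigma$ of $\tau$, and when $\sigma$ is a $d$-simplex the $d'+1$ points of $\sigma'$ are affinely independent, so $\sigma'$ is a genuine $d'$-simplex.

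Next I would show that each $\sigma'$ is a lower face of the lifted configuration $\widehat{\mathcal A'}=\{(a,h'(a)):a\in\mathcal A'\}$. Recall that $\sigma$ being a cell of $\tau$ means the affine function $g_\sigma=\varphi_{\sigma,h}$ agreeing with $h$ on $\sigma$ satisfies $g_\sigma(a)<h(a)$ for every $a\in\mathcal A\setminus\sigma$ (this is exactly the membership of $h$ in the cone $\mathcal C_\sigma$). Using the basis property of $B$ above, I extend $g_\sigma$ to the unique affine function $G_\sigma$ on $\mathrm{aff}(\mathcal A')$ with $G_\sigma(b)=h'(b)$ for all $b\in B$; by uniqueness its restriction to $\mathrm{aff}(\mathcal A)$ is again $g_\sigma$. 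Then $G_\sigma(a)=g_\sigma(a)\le h'(a)$ for $a\in\mathcal A$, with equality precisely on $\sigma$, while $G_\sigma(b)=h'(b)$ for $b\in B$ by construction. Hence $G_\sigma\le h'$ on all of $\mathcal A'$ with equality exactly on the vertices of $\sigma'$, so $\sigma'$ is a full-dimensional lower face — and crucially this holds for any choice of heights on $B$.

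For the converse, I would argue that every maximal lower face $\rho$ of $\widehat{\mathcal A'}$ has the form $\sigma'$. Since $\rho$ spans $\mathrm{aff}(\mathcal A')$ while its vertices in $\mathcal A$ lie in the $d$-dimensional $\mathrm{aff}(\mathcal A)$, the vertices of $\rho$ lying in $B$ must span the $(d'-d)$-dimensional complement modulo $\mathrm{aff}(\mathcal A)$; as $B$ is a basis of that complement with exactly $d'-d$ elements, this forces $B\subseteq\rho$ and forces $\sigma:=\rho\cap\mathcal A$ to span $\mathrm{aff}(\mathcal A)$. Restricting a lower supporting function of $\rho$ to $\mathrm{aff}(\mathcal A)$ gives an affine function that is $\le h$ on $\mathcal A$ and equals $h$ exactly on $\sigma$, so $\sigma$ is a maximal cell of $\tau$ and $\rho=\sigma\cup B=\sigma'$. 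Combining the two directions, the maximal cells of the regular subdivision induced by $h'$ are exactly the $\sigma'$; since a lifting function always induces a bona fide subdivision, $\tau'$ is a regular subdivision of $\mathcal A'$, it is a triangulation when $\tau$ is (by the simplex remark in the first step), and $h'|_{\mathcal A}=h$ induces $\tau$, which is the final ``moreover'' assertion.

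The main obstacle, and the only place real care is needed, is the cross-dimensional comparison of lower faces in the converse step: one must verify that restricting a lower supporting hyperplane of the big configuration to $\mathrm{aff}(\mathcal A)$ still supports a lower face of the small one, and then use the precise count $|\mathcal A'\setminus\mathcal A|=d'-d$ to conclude that every maximal lower face must contain all of $B$. This hypothesis is essential: with more added points their classes modulo $\mathrm{aff}(\mathcal A)$ would no longer be forced to be independent, the coned sets $\sigma'$ could overlap, and the clean bijection $\sigma\leftrightarrow\sigma'$ would break down.
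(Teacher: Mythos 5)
Your proof is correct, and it reaches the conclusion by a genuinely different (more self-contained) route than the paper. The paper observes that, by the cardinality hypothesis, each point of $\mathcal{A'}\setminus\mathcal{A}$ lies outside the affine span of the configuration built so far, so $\mathcal{A'}$ is an iterated one-point pyramid over $\mathcal{A}$; it then cites Observation~4.2.3 of de Loera--Rambau--Santos, which says that coning a (regular) subdivision to a pyramid apex yields a (regular) subdivision of the pyramid, and iterates $d'-d$ times. You instead adjoin all of $B=\mathcal{A'}\setminus\mathcal{A}$ at once and verify the lower-hull description directly: the count $|B|=d'-d$ forces the classes of $B$ modulo the direction space of $\mathrm{aff}(\mathcal{A})$ to be a basis of the complement, which gives you both the unique affine extension $G_\sigma$ of each supporting function $g_\sigma$ (showing each $\sigma'$ is a full-dimensional lower face) and the converse that every maximal lower face contains $B$ and restricts to a maximal cell of $\tau$. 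The two arguments rest on the same key observation, but yours buys two things the paper's citation-based proof leaves implicit: it is self-contained, and it shows the stronger statement that \emph{every} extension $h'$ of $h$ (arbitrary finite heights on $B$) induces $\tau'$, not merely that some suitable $h'$ exists. The only cosmetic caveat is that your bijection is stated at the level of maximal cells, which is the standard way to determine a polyhedral subdivision and is all that is used downstream in the paper.
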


  \begin{proof}
  	Consider a point $a\in\mathcal{A'} \setminus \mathcal{A}$. Then $a$ is outside the hyperplane 
  	spanned by $\mathcal{A}$, that is, ${a}\cup \mathcal{A}$ is a 
  	pyramid over $\mathcal{A}$. It is known (see Observation $4.2.3$ in \cite{triangulations})
  	 that the collection $\{\sigma\cup {a}, \sigma \in \tau\}$ is a subdivision 
  	 of ${a}\cup \mathcal{A}$, and it is regular if and only if $\tau$ is regular. 
  	Then, we can see $\mathcal{A'}$ as an iterated pyramid over $\mathcal{A}$ 
  	and the lemma follows by applying successively the previous fact.
  \end{proof}
  
  Given a matrix $D \in \R^{d_D \times n_D}$ and $I \subset \{1, \dots,n_D\}$,
   we will denote by $D_I$ the submatrix of $D$ corresponding to the columns indexed by $I$. For $i\in \{1, \dots, n_D\}$,
  	$D(i)$ will denote the matrix obtained by removing the $i$-th column of $D$, and for $j \in \{1, \dots, d_D\}$,
  	 $D_j$ will denote the $j$-th row of  $D$.
  
  \begin{thm}\label{thm:extending}
  	Let  $d,d'\in \N$ with $d \le d'$. 
  	Let $\mathcal{A} \subset \mathcal{A''} \subset \Z^{d'}$ be finite point configurations, 
  with ${\rm rk}{\rm Aff}(\mathcal{A})=d $, ${\rm rk}{\rm Aff}(\mathcal{A''})=d'$.
Write	 $\mathcal{A}=\{a_1,\dots,a_n\}$,
  	$\mathcal{A''}=\mathcal{A}\cup \{a_{n+1},\dots,a_m\}$, with $ m \ge d' > n$. Set
	 $\mathcal{A'}=\mathcal{A}\cup \{a_{n+1},\dots,a_{n+d'-d}\}$ and assume that
${\rm rk}{\rm Aff} \mathcal{A'}=d'$.  
Let $\tau$ be a regular subdivision 
  	of $\mathcal{A}$ induced by a lifting function $h$, $\tau'$ the regular subdivision 
  	of $\mathcal{A'}$  obtained as in Lemma~\ref{prop:diffdim}, and $\tau''$ any regular 
  	subdivision of $\mathcal{A''}$ such that  $\tau'\subset\tau''$, 
  	induced by a lifting function $h''$, such that $h''$ restricted to $\mathcal{A}$ induces $\tau$.
  	
  	Let $f_1, \dots, f_d$ be polynomials with support in $\mathcal{A}$ and coefficient matrix
  	$C$ of rank $d$. Let $f''_1, \dots, f''_d, \dots,f''_{d'}$ be  polynomials with support $\mathcal{A''}$ and
  	coefficient matrix $C''$ of rank $d'$ of the form
  	\[
  	\begin{pmatrix}
  	C & 0  & D_1\\
  	M & B & D_2
  	\end{pmatrix},
  	\]
  	with $B\in\R^{(d'-d)\times (d'-d)}$ invertible.
  	Assume $\tau$ has $p$ $d$-simplices positively decorated
  	by $C$ and the determinants of the submatrices 
  	\[
  	\begin{pmatrix}
  	C_I \\
  	(B^{-1}M_I)_j
  	\end{pmatrix},
  	\]
  	have all the same sign as $(-1)^{d+i}\det(C_I(i))$, for each $i=1,\dots,d+1$, for each $j=1,\dots,d'-d$, and  for each subset $I\subset \{1, \dots,n\}$ 
  	which indexes a positively decorated simplex.
  	  	Then, there exists $t_0>0$, such that for $0<t<t_0$, the deformed system $f''_{1,t} = \dots = f''_{d',t}=0,$
  	where
  	\[f''_{i,t}(x) = \sum_{j=1}^m c''_{i,j} t^{h''(a_j)} x^{a_j},\]
  	has at least $p$ positive nondegenerate real roots.
  \end{thm}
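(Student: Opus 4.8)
The plan is to reduce the statement to the root-counting result for a single regular triangulation carrying several positively decorated simplices (Theorem~2.9 in~\cite{AGB1}): I would extend each of the $p$ positively decorated $d$-simplices of $\tau$ to a $d'$-simplex of $\tau''$, show that these extensions are positively decorated by $C''$, and then apply that theorem to the lifting function $h''$. The bulk of the work is a sign computation; the combinatorics is supplied by Lemma~\ref{prop:diffdim}.

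First I would fix a positively decorated $d$-simplex $\sigma\in\tau$, indexed by a set $I\subset\{1,\dots,n\}$ with $|I|=d+1$, and consider its extension $\sigma'=\sigma\cup(\mathcal{A'}\setminus\mathcal{A})$. By Lemma~\ref{prop:diffdim}, $\sigma'$ is a genuine $d'$-simplex, the collection $\tau'=\{\sigma':\sigma\in\tau\}$ is a regular triangulation of $\mathcal{A'}$ whose cells are exactly these extensions, and $\tau'$ is induced by a lifting function restricting on $\mathcal{A}$ to a positive multiple of $h$. Since by hypothesis $\tau'\subset\tau''$ and $h''$ induces $\tau''$, each $\sigma'$ is a $d'$-cell of $\tau''$, so $h''\in\mathcal{C}_{\sigma'}$; moreover distinct $\sigma$ give distinct $\sigma'$ because $\sigma'\cap\mathcal{A}=\sigma$.

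The core step is to prove that each $\sigma'$ is positively decorated by $C''$. The columns of $C''$ indexed by the vertices of $\sigma'$ are $I\cup\{n+1,\dots,n+d'-d\}$, which by the block form of $C''$ assemble into the $d'\times(d'+1)$ matrix $N=\begin{pmatrix} C_I & 0 \\ M_I & B \end{pmatrix}$ (the $D_1,D_2$ columns never enter). Since $\sigma$ is a simplex, $\ker C_I$ is spanned by $v_0$ with $(v_0)_i=(-1)^i\det(C_I(i))$, and positive decoration of $\sigma$ means all $(v_0)_i$ share one sign $\epsilon$. As $B$ is invertible, $\ker N$ is spanned by $(v_0,\,-B^{-1}M_I v_0)$, so $\sigma'$ is positively decorated iff $-B^{-1}M_I v_0$ also has all coordinates of sign $\epsilon$, i.e. $(B^{-1}M_I v_0)_j$ has sign $-\epsilon$ for all $j$. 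Expanding along the last row yields the identity $\det\begin{pmatrix} C_I \\ (B^{-1}M_I)_j \end{pmatrix}=(-1)^{d+1}(B^{-1}M_I v_0)_j$, and since $(-1)^{d+i}\det(C_I(i))=(-1)^d(v_0)_i$ has sign $(-1)^d\epsilon$, the sign hypothesis of the theorem says precisely that this determinant has sign $(-1)^d\epsilon$ for every $i$ and $j$, hence that $(B^{-1}M_I v_0)_j$ has sign $-\epsilon$. Thus the stated condition is exactly equivalent to positive decoration of $\sigma'$. Having produced $p$ distinct positively decorated $d'$-simplices inside the regular triangulation $\tau''$ induced by $h''$, I would conclude by Theorem~2.9 of~\cite{AGB1} that there is $t_0>0$ with the deformed system $f''_{1,t}=\dots=f''_{d',t}=0$ having at least $p$ nondegenerate positive solutions for $0<t<t_0$.

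The main obstacle I expect is the sign bookkeeping in the linear-algebra step: one must track simultaneously the $(-1)^i$ in the cofactor formula for $v_0$, the global $(-1)^{d+1}$ produced by expanding $\det\begin{pmatrix} C_I \\ (B^{-1}M_I)_j \end{pmatrix}$ along its last row, and the $(-1)^d$ relating $(v_0)_i$ to $(-1)^{d+i}\det(C_I(i))$, and verify these four signs combine so that the hypothesis is \emph{equivalent} to — not merely sufficient for — positive decoration of $\sigma'$. A secondary point to check is that the lifting functions supplied by Lemmas~\ref{lem:samedim} and~\ref{prop:diffdim} can be chosen compatibly with $h''$, so that $\tau'\subset\tau''$ and the membership $h''\in\mathcal{C}_{\sigma'}$ hold simultaneously for all $p$ extended simplices; this is precisely what the iterated-pyramid construction in those lemmas delivers.
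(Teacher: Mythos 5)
Your proposal is correct and follows essentially the same route as the paper: extend each positively decorated $d$-simplex of $\tau$ to a $d'$-simplex of $\tau''$ via the pyramid construction of Lemma~\ref{prop:diffdim}, verify that the extended simplices are positively decorated by $C''$ using the block structure $\bigl(\begin{smallmatrix} C_I & 0\\ M_I & B\end{smallmatrix}\bigr)$, and conclude with Theorem~2.9 of~\cite{AGB1}. The only (harmless) difference is that you check the positively spanning property through the kernel vector $(v_0,\,-B^{-1}M_I v_0)$ and a cofactor expansion along the last row, whereas the paper left-multiplies by $\bigl(\begin{smallmatrix}\Id_d & 0\\ 0 & B^{-1}\end{smallmatrix}\bigr)$ and computes the signed maximal minors directly; your sign bookkeeping agrees with the paper's and in fact establishes the slightly stronger point that the hypothesis is equivalent to positive decoration of the extended simplices.
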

  
  \begin{proof} 
The subdivision $\tau''$ can be obtained by Lemma~\ref{lem:samedim}. Note that the columns of $B$ correspond to the points in $\mathcal{A'}\setminus\mathcal{A}$. 

  Suppose that $\Delta\in \tau$ is  a $d$-simplex positively decorated by $C$. 
  Then $\Delta'=\Delta\cup\{a_{n+1},\dots,a_{n+d'-d}\}$ is a $d'$-simplex of 
  $\tau'\subset \tau''$. We will show that $\Delta'$ is positively decorated by $C''$.
  	
  	Suppose that $\Delta$ is indexed by the set $I=\{{i_1},\dots,{i_{d+1}}\}$ of $\{1, \dots,n\}$.  
  	We have to prove that the submatrix $C''_{I'}$ of $C''$ is positively spanning, 
  	with $I'=I\cup\{n+1,\dots,n+d'-d\}$. This is equivalent to prove that the matrix
  	\[G=  	
  	\begin{pmatrix}
  	\rm Id_{d} & 0 \\
  	0 & B^{-1} 
  	\end{pmatrix}
  	\begin{pmatrix}
  	C_I & 0 \\
  	M_I & B 
  	\end{pmatrix}=\begin{pmatrix}
  	C_I & 0 \\
  	B^{-1}M_I & \Id_{d'-d} 
  	\end{pmatrix},
  	\]
  	is positively spanning, as the property of being positively spanning remains invariant under multiplication by invertible matrices.
  	
  	We compute $(-1)^i \det( G(i))$ for $i=1,\dots,d'+1$. For $i=1,\dots,d+1$, we have:
  	\[(-1)^i \det( G(i))=
  	(-1)^i \det \begin{pmatrix}
  	C_I(i) & 0 \\
  	B^{-1}M_I(i) & \rm Id_{d'-d} 
  	\end{pmatrix}=(-1)^iC_I(i),
  	\]
  	which have all the same sign, because $\Delta$ is positively decorated by $C$.
  	Take now $i>d+1$. Let $j\in{1,\dots,d'-d}$ such that $d+1+j=i$. Moving the $i$-th row of $G$ 
  	to the row $d+2$ in $j-1$ interchanges of consecutive rows, we have:
  	\begin{align*}
  	(-1)^i \det( G(i))=&
  	(-1)^i \det \begin{pmatrix}
  	C_I & 0 \\
  	B^{-1}M_I & \rm Id_{d'-d}(j) 
  	\end{pmatrix}\\
  	=&(-1)^{d+1+j} (-1)^{j-1}\det \begin{pmatrix}
  	C_I & 0 \\
  	(B^{-1}M_I)_j  & \rm 0 \\
  	(B^{-1}M_I)[j]  & \rm Id_{d'-d -1}(j) 
  	\end{pmatrix}\\
  	=&(-1)^{d} \det \begin{pmatrix}
  	C_I \\
  	(B^{-1}M_I)_j
  	\end{pmatrix},
  	\end{align*}
  	where $(B^{-1}M_I)[j]$ denotes the submatrix of $B^{-1}M_I$ obtained by removing its $j$-th row. 
  	For each $j=1,\dots,d'-d$, this determinant has the same sign as 
  	$(-1)^iC_I(i)$, for each $j=1,\dots,d'-d$, by hypothesis.
  	  	Then, the simplex $\Delta'$ is positively decorated by $C''$. 
  	  	
  	  	We deduce that if $\tau$ 
  	has $p$ $d$-simplices positively decorated by $C$, then $\tau''$ has $p$ 
  	$d'$-simplices positively decorated by $C''$. Then, by Theorem~2.9 in~cite{AGB1} (from which we extracted \ref{th:BS}) , there exists 
  	$t_0>0$, such that for $0<t<t_0$, the system $f''_{1,t} = \dots = f''_{d',t}=0,$ has at least $p$ positive nondegenerate real roots.
  \end{proof}

  \begin{remark} \rm
  	The conditions that guarantee that the $p$ $d'$-simplices of $\tau''$ are positively 
  	decorated by $C''$ include the conditions such that the $p$ $d$-simplices of
  	$\tau$ are positively decorated by $C$, plus other conditions. In the cases of cascades of Goldbeter-Koshland loops that
  	we studied in Section~\ref{sec:3}, 
  	these other conditions are automatically fulfilled.
  \end{remark}

    \end{document}